\newif\ifshortver
\shortvertrue
\shortverfalse
 
\ifshortver
\documentclass[conference]{IEEEtran}
\else
\documentclass[onecolumn, english]{IEEEtran}
\fi

\IEEEoverridecommandlockouts

\usepackage[hidelinks   ]{hyperref} 
\usepackage{algorithm}
\usepackage[noend]{algpseudocode}

\algrenewcommand\algorithmicindent{5mm}%

\usepackage{amsmath}
\allowdisplaybreaks
\usepackage{amssymb,amsthm}
\usepackage{amsfonts}
\usepackage{cite}

\usepackage{amssymb,amsthm}
\usepackage{amsfonts}
\usepackage{cite}
\usepackage{tikz}
\usepackage{bbm}
\usetikzlibrary{positioning,arrows,shapes,chains,fit,scopes}
\usepackage{pgfplots}
\usepackage{pgfplotstable}	
\usepackage{booktabs}

\usepackage{threeparttable}
\usepackage{makecell}
\usepackage{multirow}

\usepackage{colortbl}
\usepackage{color}
\definecolor{bgcolor}{rgb}{0.93,0.99,1}
\definecolor{bgcolor2}{rgb}{0.8,1,0.8}
\definecolor{bgcolor3}{rgb}{0.50,0.90,0.50}
\usepackage{tcolorbox}
\usepackage{pifont}
\definecolor{mydarkgreen}{RGB}{39,130,67}
\definecolor{mydarkred}{RGB}{192,25,25}

\usepackage{blkarray, bigstrut}
 
\allowdisplaybreaks

\theoremstyle{theorem}
\newtheorem{theorem}{Theorem}
\newtheorem{lemma}[theorem]{Lemma}

\newtheorem{proposition}[theorem]{Proposition}
\theoremstyle{definition}
\newtheorem{definition}{Definition}

\theoremstyle{remark}
\newtheorem{remark}{Remark}

\usepackage{lipsum}
\usepackage{mathtools}
\usepackage{cuted}

\title{One-Shot Information Hiding and Compound Wiretap Channels}

\begin{document}

\author{%
Yanxiao Liu, \textit{Member, IEEE} and Cheuk Ting Li, \textit{Member, IEEE}
\thanks{
This work was partially supported by two grants from the Research Grants Council of the Hong Kong Special Administrative Region, China [Project No.s: CUHK 24205621 (ECS), CUHK 14209823 (GRF)].    

This paper was presented in part at the 2024 IEEE Information Theory Workshop (ITW 2024).

Yanxiao Liu was with the Department of Information Engineering, The Chinese University of Hong Kong, Hong Kong SAR, China, and this work was done during his time there. 
He is now with the Department of Electrical and Electronic Engineering, Imperial College London, London, UK.
Email: yanxiaoliu@link.cuhk.edu.hk

Cheuk Ting Li is with the Department of Information Engineering and the Institute of Theoretical Computer Science and Communications, The Chinese University of Hong
Kong, Hong Kong SAR of China. Email:  ctli@ie.cuhk.edu.hk. 
}
}

\maketitle

\begin{abstract}

In recent years, due to the growing reliance on large amounts of data that are communicated, analyzed, and utilized, which inherently contain sensitive and personal information, secrecy and privacy in communication have become increasingly important. 
We present nonasymptotic information-theoretic analyses of two fundamental secrecy problems under channel uncertainties: the information hiding problem and the compound wiretap channel. 
The former admits a game-theoretic formulation, where one party (an information hider and a decoder) seeks to embed secret messages into a host signal for later reconstruction, while the opposing party (an attacker) attempts to remove or degrade the embedded information.
The latter generalizes Wyner's wiretap channel by allowing multiple potential channel states. 
The information hiding problem concerns \emph{active} attacks during data transmission, while the compound wiretap channel addresses \emph{passive} eavesdropping and information leakage. 
The two problems, both of which consider channels with uncertainties, can be studied under a unified framework by utilizing a covering argument and the Poisson matching lemma by Li and Anantharam. 
We derive novel one-shot achievability results for both problems that are applicable to any source distribution and any class of channels (not necessarily memoryless or ergodic), and that apply to both discrete and continuous cases. 
We also show that existing asymptotic results on both problems can be recovered by applying our results to discrete memoryless channels.
\end{abstract}

\begin{IEEEkeywords}
One-shot achievability, finite blocklength, information hiding, public watermarking, wiretap channels. 
\end{IEEEkeywords}

\section{Introduction}
\label{sce::intro}

In data science and wireless communication, secrecy and privacy are becoming increasingly important due to the growing dependence on large amounts of data being communicated, analyzed, and utilized, which inherently contain sensitive and personal information and, therefore, should be well protected from leakage. 
Over the past few decades, the fundamental information-theoretic limits of various secrecy and privacy problems have been extensively studied. 
During information transmission, there are two types of concerns regarding information leakage: \emph{active} attacks and \emph{passive} wiretapping. 
In this paper, we consider two fundamental problems in information theory that address these two concerns, respectively: the information hiding problem~\cite{moulin2003information} and the compound wiretap channel~\cite{liang2009compound}.

For the information hiding problem~\cite{moulin2003information}, it can be formulated  as a communication system from a game-theoretic perspective, where an encoder-decoder team seeks to transmit a confidential message \emph{embedded} in a host data source, while the opposing side is an attacker, modeled as a noisy channel, attempts to destroy or degrade the message. 
The information-theoretic limits of different variations of information hiding have been extensively investigated over the past two decades~\cite{moulin2003information, somekh2003error, somekh2004capacity, cohen2002gaussian}, due to its wide range of applications, including watermarking, fingerprinting, steganography, and copyright protection. 
Existing analyses of information hiding problems borrow techniques from various fields, including wireless communication, signal processing, cryptography, and game theory. 
For the compound wiretap channel, \cite{liang2009compound} modeled the problem as a generalization of Wyner's conventional wiretap channel~\cite{wyner1975wire}, where the communication channel can take multiple potential states.
The objective is to ensure reliable transmission and minimize information leakage regardless of which state occurs.
This model is more general and better suited to practical wireless communication scenarios where the transmitter may not have knowledge of the channel conditions or where channel characteristics change rapidly, yet communication performance and security must still be guaranteed.

In all existing studies on these two problems, the information-theoretic limits have been analyzed in the \emph{asymptotic} regime, assuming that the signal has a blocklength approaching infinity.  
However, this assumption does not hold in practice, as packets have bounded lengths, which can be quite short in certain applications~\cite{ji2018ultra}. 
Over the past decade, \emph{finite-blocklength information theory} has been extensively studied~\cite{kostina2013lossy, tan2013dispersions, polyanskiy2010channel, hayashi2009information}, leading to the derivation of nonasymptotic limits, where the law of large numbers does not apply and conventional typicality-based tools are inapplicable.  
More generally, we are interested in \emph{one-shot} achievability results, where the channel or source is arbitrary and used only \emph{once}.
Various one-shot coding techniques have been proposed~\cite{verdu2012non, yassaee2013technique, li2021unified}, yielding one-shot bounds that recover existing (first-order and second-order) asymptotic results when applied to memoryless sources or channels. See Section~\ref{subsec::review_oneshot} for a review.

In this paper, which extends the conference version~\cite{liu2024hiding},\footnote{
The conference version~\cite{liu2024hiding} studies the information hiding setting only. In comparison, this journal version not only provides more detailed discussions and generalizations of the information hiding problem, but also extends the framework to the compound wiretap channel and derives novel one-shot achievability results, which were not covered in~\cite{liu2024hiding}.} we study the one-shot achievability results of the information hiding problem and the compound wiretap channel. 
Compared to the asymptotic information hiding~\cite{moulin2003information}, we remove the assumption that the decoder knows the attacker's strategy, considering instead that the only available knowledge is that the attack channel belongs to a set (which may have infinite cardinality).
For both problems, our goal is to provide \emph{distributionally robust} coding strategies (also see~\cite{malik2024distributionally}) due to the channel uncertainty, and we hence study them under a unified framework.
We briefly summarize our contributions as follows.

\begin{itemize}
    \item We derive novel one-shot achievability results for the information hiding problem~\cite{moulin2003information} and the compound wiretap channel~\cite{liang2009compound}, under a unified framework that is designed for secrecy problem with channel uncertainties. 
    
    \item For both problems, most of the existing asymptotic analyses assume the decoder knows the channels.\footnote{This assumption can be reasonable when the blocklength is large, but should be dropped in the one-shot scenarios. See Section~\ref{subsec::disc} for discussions.} 
    We derive one-shot results without such an assumption, by utilizing the Poisson matching lemma~\cite{li2021unified} and a covering argument~\cite{blackwell1959capacity}. 

    \item We show the one-shot results recover existing asymptotic bounds when applied to memoryless channels (possibly subject to distortion constraints), hence providing alternative proofs that can be simpler. 

    \item Additionally, unlike~\cite{moulin2003information, liang2009compound}, our results apply to both continuous and discrete cases.
\end{itemize}

The paper is organized as follows.
We begin with a literature review on information hiding, watermarking, compound wiretap channels, and one-shot information theory in Section~\ref{sec::review}.
Next, we present the one-shot information hiding problem in Section~\ref{sec::generalized_info_hiding} and recover existing asymptotic results in Section~\ref{sec::recovery_hiding}.
Within the same framework, we study the one-shot compound wiretap channel in Section~\ref{sec::wiretap}. 
Finally, in Section~\ref{sec::watermarking_AI}, we discuss how our analysis on information hiding can benefit the design of better AI-based watermarking tools, which are becoming increasingly important in the era of generative models.

\section{Related Literature}
\label{sec::review}

\subsection{Information Hiding}
\label{subsec::review_infohiding}

The information hiding problem has been studied since~\cite{cohen2002gaussian, moulin2003information, somekh2003error, somekh2004capacity}, due to its wide range of applications, including watermarking, fingerprinting, audio/image/video processing, copyright protection, and steganography.
The goal is to \emph{hide} a message into some host signal (by introducing a certain level of distortion), so that the message can be correctly reconstructed after suffering \emph{attacks} (which introduce another level of distortion).
This problem was modeled as a communication problem, and asymptotic information-theoretic capacity was derived in~\cite{moulin2003information}.
In general, information hiding is closely related to the Gelfand–Pinsker problem~\cite{gelfand1980coding,heegard1983capacity}, and various extensions have been studied, e.g., the case where side information is available to the encoder, decoder, and adversary~\cite{moulin2007capacity}, and the case where the decoder has rate-limited side information~\cite{steinberg2008coding}. 
See~\cite{barron2003duality} for its duality with the Wyner–Ziv problem, and~\cite{keshet2008channel} for a comprehensive survey.
We discuss its applications and related settings with different objectives as follows.

\subsubsection{\textbf{Watermarking, Fingerprinting, and Steganography}} 

The setting in~\cite{moulin2003information} can be viewed as \emph{public watermarking}~\cite{somekh2004capacity}, where the host signal is available only at the encoder. In contrast, when it is also available at the decoder, \emph{private watermarking} has been studied in~\cite{somekh2003error, cox1997secure}. In the Gaussian case, public and private watermarking have the same capacity~\cite{cohen2002gaussian}, but this is not true in general. Watermarking problems consider messages containing personal identification information to be protected from attacks, but secrecy is not always required. In comparison, \emph{digital fingerprinting}~\cite{boneh1998collusion, moulin2003information} embeds fingerprints into the host data to uniquely identify users for tracing illegal data usage, which can be more challenging due to potential collusion. A provably good data embedding strategy was introduced by~\cite{chen2001quantization}. Random coding error exponents have been investigated in these problems~\cite{moulin2007capacity, merhav2000random, moulin2008universal}. 
In~\cite[Sec. VII.C]{moulin2003information} it was indicated that information hiding was applicable to steganography, and this was later studied by~\cite{wang2008perfectly} for the capacity of perfectly secure steganographic systems. 
See \cite{guan2022double} for trellis codes and \cite{li2020designing} for polar codes, which are used in steganographic code designs. 
See \cite{}

\subsubsection{\textbf{Host, Stegotext, and Reversibility}} 

In the conventional information hiding~\cite{moulin2003information}, the message is embedded into host data by producing an encoded signal (``stegotext''), with the goal of recovering the message only. Other objectives have been considered later, such as conveying the host~\cite{kim2008state} or reconstructing the 
stegotext~\cite{grover2015information, xu2023information}. \emph{Reversible} information embedding has also been investigated~\cite{kalker2002capacity, steinberg2006reversible, sumszyk2009information}, where the host signal needs to be decoded. However, this can incur a high cost when the host has high entropy~\cite{kalker2002capacity}, making perfect reversibility even impossible for continuous host signals~\cite{sumszyk2009information}. Nevertheless, in practice, the goal is often to enable retransmission of the stegotext, and codes for stegotext recovery have been studied~\cite{grover2015information, xu2023information}. 
For this setting, single-letter capacity-distortion tradeoffs are known only for logarithmic distortion~\cite{kim2008state} and quadratic distortion in the Gaussian case~\cite{sutivong2005channel}. 
We discuss these generalizations on our setting in Section~\ref{subsec::hiding_generalization}.

\subsection{Compound Wiretap Channels}
\label{subsec::cpd_wiretap}

Compound wiretap channels~\cite{liang2009compound} generalize the conventional wiretap channel model~\cite{wyner1975wire} by allowing both the legitimate channel and the eavesdropper's channel to have multiple possible states. The objective is to guarantee reliable and secure signal transmission regardless of which state occurs. This is a practical model for channel uncertainty, where the transmitter may have no knowledge of the channel (due to the dynamic nature of the wireless medium or unavoidable implementation/estimation inaccuracies), but zero performance outage is still required (e.g., for ultra-reliable communications~\cite{ji2018ultra}). 
\cite{liang2009compound} proposed achievable and converse results, with the converse bounds shown to be tight in certain cases by~\cite{bjelakovic2013secrecy}. They also studied the achievable secrecy degrees of freedom (s.d.o.f.) region for a multi-input multi-output (MIMO) model, which was later extended to the case of two confidential messages in~\cite{kobayashi2009compound}. The s.d.o.f. of compound wiretap parallel channels were also studied in~\cite{liu2008secrecy}. See~\cite{ekrem2010gaussian, ekrem2012degraded} for discussions on Gaussian MIMO compound wiretap channels.

In~\cite{liang2009compound, bjelakovic2013secrecy}, the results focused on discrete memoryless channels with a countably finite uncertainty set (i.e., the set from which the exact channel realization is drawn). This was later extended to \emph{arbitrary uncertainty sets}, including continuous alphabets, by~\cite{schaefer2015secrecy}, which is also one of the contributions of our paper. Moreover, \cite{boche2015continuity} showed that the secrecy capacity is a continuous function of the uncertainty set.

\subsection{One-shot Information Theory}
\label{subsec::review_oneshot}

For all the literature discussed in this section so far, the information-theoretic bounds are investigated \emph{asymptotically} in the large blocklength limit, based on the law of large numbers. 
However, this assumption is impractical, as packets have bounded lengths, which can be very short in low-latency communications~\cite{ji2018ultra}. 
Over the past decade, \emph{finite blocklength}~\cite{kostina2013lossy, tan2013dispersions, polyanskiy2010channel} and \emph{one-shot}~\cite{hayashi2009information, verdu2012non, yassaee2013technique, song2016likelihood, watanabe2015nonasymp, li2021unified} information theory have been widely studied, leading to nonasymptotic results. 
In the one-shot setting, we assume the channel or source is used only \emph{once} (i.e., it need not be memoryless or ergodic), and the blocklength is $1$. 
These results are expected to recover existing (first-order and second-order) asymptotic bounds when applied to memoryless channels or sources (e.g., asymptotic channel coding capacity by Shannon~\cite{shannon1948mathematical} can be implied by the one-shot bounds of Feinstein~\cite{feinstein1954new}, Shannon~\cite{shannon1957certain}, or~\cite{hayashi2009information, polyanskiy2010channel}). 
Similar to \cite{liu2025one, liu2025nonasymptotic}, in this paper we utilize the Poisson functional representation \cite{li2018strong} for one-shot results. 
It has also been used for differential privacy \cite{liu2024universal, flamich2026scalable} and watermarked speculative sampling \cite{liu2026watermarkable}.

In this paper, we consider the one-shot scenario and derive achievability results using the \emph{Poisson matching lemma}~\cite{li2021unified}, which has been shown to improve upon previously known one-shot bounds in various settings with simpler analyses~\cite{li2021unified}, with recent applications including hypothesis testing~\cite{guo2024hypothesis}, secret key generation~\cite{hentila2024communication}, unequal message protection~\cite{khisti2024unequal} and oblivious relaying~\cite{liu2025nonasymptotic}. 
It is based on the Poisson functional representation~\cite{li2018strong}, which has also recently been applied to various fields in combination with other techniques, such as neural estimation~\cite{lei2022neural} and differential privacy~\cite{liu2024universal}. 
A refined version of the Poisson matching lemma has been used to analyze general (multi-hop) noisy networks in~\cite{liu2025one}.

\subsection*{Notations}

We assume logarithm and entropy are to the base $2$. 
For a statement $S$, we use $\mathbf{1}\{S\}$ to denote its indicator, i.e., $\mathbf{1}\{S\} = 1$ if $S$ holds, and otherwise $\mathbf{1}\{S\} = 0$.
We use $\delta_{a}$ to denote the degenerate distribution $\mathbf{P}\{X=a\}=1$. 
For two random variables $X,Y$, the information density is defined as $\iota_{X;Y}(x;y) = \log((\mathrm{d} P_{X|Y}(\cdot |y)/ \mathrm{d} P_X)(x))$, where $\mathrm{d} P_{X|Y}(\cdot |y)/ \mathrm{d} P_X$ denotes the Radon-Nikodym derivative. We sometimes omit the subscript and write $\iota(x;y)$ if the random variables are clear from the context. 
The total variation (TV) distance between two distributions $P,Q$ over $\mathcal{X}$ is $\Vert P-Q\Vert_{\mathrm{TV}} := \sup_{A \subseteq \mathcal{X} \; \text{measurable}} |P(A)-Q(A)|$. 
The set of positive integers is denoted by $\mathbb{N}_+$. 
We write $[i] := \{1,\ldots, i\}$ for $i\in\mathbb{N}_+$. 
We write $P \ll Q$ to denote that $P$ is absolutely continuous with respect to $Q$. 
\smallskip

\section{Poisson Matching Lemma}
\label{sec::PML}

In this section, we review the Poisson matching lemma~\cite{li2021unified}, which will be used throughout this paper. 
The techniques in~\cite{moulin2003information, somekh2004capacity}, e.g., tools based on \emph{typical sets}, which have resemblances to Gelfand--Pinsker coding~\cite{gelfand1980coding, heegard1983capacity}, \textcolor{black}{are not suitable for the one-shot setting, since they rely on the law of large numbers to study asymptotic behavior}. 
The Poisson matching lemma has been shown to be useful in proving one-shot achievability results of network information theory~\cite{li2021unified, liu2025one}, see Section~\ref{subsec::disc} for a detailed discussion.

The Poisson matching lemma is rooted in the Poisson functional representation~\cite{li2018strong} that is reviewed as follows. 
Fix a probability distribution $\bar{P}$ over $\mathcal{U}$. 
Let $(T_i)_{i=1,2,\ldots}$ be a Poisson process with rate $1$, i.e., $T_1,T_2-T_1,T_3-T_2 \stackrel{\mathrm{iid}}{\sim} \mathrm{Exp}(1)$. Let $(\bar{U}_i)_i$ be an independent i.i.d. sequence with distribution $\bar{P}$. This ``marked'' Poisson process $(\bar{U}_i,T_i)_i$ supports a ``query operation'' given by the Poisson functional representation, where one can input a distribution $P$ over $\mathcal{U}$, and obtain one sample $\tilde{U}_P$ with distribution $P$. The Poisson functional representation is given by
\[
\tilde{U}_P := \bar{U}_K, \;\; \text{where}\; K:= \underset{i}{\arg \min} \; T_i \cdot \Big(\frac{\mathrm{d}P}{\mathrm{d}\bar{P}}(\bar{U}_i)\Big)^{-1}.
\]

The way this Poisson process is used in communication settings (e.g., in~\cite{li2021unified, liu2025one}) is that the encoder would query the process using the prior distribution of the signal to obtain the signal to be sent, and the decoder would query using the posterior distribution of the signal given the noisy observation to obtain the message. 
There is no error in the communication if the two queries return the same sample.
The probability of error can be bounded by the Poisson matching lemma~\cite{li2021unified}, shown as follows.

\begin{lemma}[Poisson matching lemma~\cite{li2021unified}]
Consider two distributions $P,Q \ll \bar{P}$. Almost surely, we have
\begin{equation*} 
    \mathbf{P}\big(
    \tilde{U}_Q\neq \tilde{U}_P\,\big|\, \tilde{U}_P\big) \leq 1 - \Big( 1 + \frac{\mathrm{d}P}{\mathrm{d}Q}(\tilde{U}_P)
    \Big)^{-1}. 
\end{equation*}
\end{lemma}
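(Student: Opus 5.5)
The plan is to use the exponential-race description of the Poisson functional representation. I will condition on the index $K_P$ selected by the $P$-query and on the value $\tilde{U}_P=\bar{U}_{K_P}=u$. Write $p(v)=\frac{\mathrm{d}P}{\mathrm{d}\bar P}(v)$ and $q(v)=\frac{\mathrm{d}Q}{\mathrm{d}\bar P}(v)$. The marked process $\{(\bar U_i,T_i)\}$ is a Poisson process on $\mathcal U\times[0,\infty)$ with intensity $\bar P\otimes\mathrm{Leb}$. Mapping each point to $(\bar U_i, T_i/p(\bar U_i))$ gives (by the mapping theorem) a Poisson process whose intensity measure is $P\otimes\mathrm{Leb}$ on the support of $P$. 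The $P$-query selects the point with the smallest second coordinate, $S:=T_{K_P}/p(\bar U_{K_P})$. Standard facts about Poisson processes then give $S\sim\mathrm{Exp}(1)$ and $\tilde U_P\sim P$, with the two independent. Moreover, conditioned on $(S,\tilde U_P)=(s,u)$, the remaining points form a Poisson process with intensity $\bar P\otimes\mathrm{Leb}$ restricted to $\{(v,t): t/p(v)>s\}$, i.e.\ to $t> s\,p(v)$.

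Next I bound the error event conditionally. The $Q$-query returns $\bar U_{K_P}$ if the selected point also minimizes $T_i/q(\bar U_i)$. This holds whenever no other point satisfies $T_i/q(\bar U_i)< s\,p(u)/q(u)$, since the selected point has $T_{K_P}=s\,p(u)$. Using the conditional Poisson description, the number of other points with $t< s\,p(u)\,q(v)/q(u)$ and $t>s\,p(v)$ is Poisson with mean
\[
\int \big(s\,p(u)q(v)/q(u)-s\,p(v)\big)^+\,\bar P(\mathrm dv)\;\le\; s\,\frac{p(u)}{q(u)}\int q\,\mathrm d\bar P \;=\; s\,\frac{\mathrm dP}{\mathrm dQ}(u).
\]
Hence $\mathbf P(\tilde U_Q\ne\tilde U_P\mid S=s,\tilde U_P=u)\le 1-e^{-s\,\frac{\mathrm dP}{\mathrm dQ}(u)}$.

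Finally I average over $S\sim\mathrm{Exp}(1)$, using the independence of $S$ and $\tilde U_P$:
\[
\mathbb E\big[e^{-Sa}\big]=\frac{1}{1+a}\quad\text{with } a=\frac{\mathrm dP}{\mathrm dQ}(u).
\]
This gives the claimed bound $1-(1+a)^{-1}$. If $q(u)=0$, then $\frac{\mathrm dP}{\mathrm dQ}(u)=\infty$ and the bound is trivially $1$. The event $\tilde U_Q=\tilde U_P$ with distinct indices carrying equal marks only helps, so the inequality direction is preserved.

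The main obstacle is the rigorous justification of the conditional structure: the independence of $S$ and $\tilde U_P$, and the claim that the remaining points are Poisson on the region above the minimum. I would handle this via the mapping theorem together with the standard "first arrival of a Poisson process and its independent mark" argument. Concretely, the pairs $(T_i/p(\bar U_i),\bar U_i)$ form a Poisson process with intensity $\mathrm{Leb}\otimes P$. Its points sorted by the first coordinate are a rate-1 process with i.i.d.\ $P$ marks. Conditioning on the first point then leaves a Poisson process above it by the strong Markov/independent-increments property.
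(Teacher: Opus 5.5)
Your proof is correct and takes essentially the same route as the original proof in Li and Anantharam~\cite{li2021unified}; the present paper only cites the lemma and does not prove it. Both proofs use the mapping theorem to pass to a Poisson process with intensity $P\otimes\mathrm{Leb}$, condition on its first arrival $(S,\tilde{U}_P)$ with $S\sim\mathrm{Exp}(1)$ independent of $\tilde{U}_P$, bound the Poisson mean of the remaining points that beat it in the $Q$-ordering by $S\,\frac{\mathrm{d}P}{\mathrm{d}Q}(\tilde{U}_P)$, and average $e^{-Sa}$ over $S$ to obtain $(1+a)^{-1}$. The one detail you leave unstated is that ties in the $Q$-ordering have probability zero, and this is immediate because the graph $\{t=c\,q(v)\}$ is null for $\bar{P}\otimes\mathrm{Leb}$.
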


\medskip

\section{One-shot Information Hiding}
\label{sec::generalized_info_hiding}

In this section, we formulate the one-shot information hiding problem, provide our one-shot achievability results, show that they recover the existing asymptotic results, and discuss generalizations of the problem as well as some related problems.

\ifshortver
\begin{figure}[htpb]
    \centering
    \includegraphics[scale = 0.26]
    {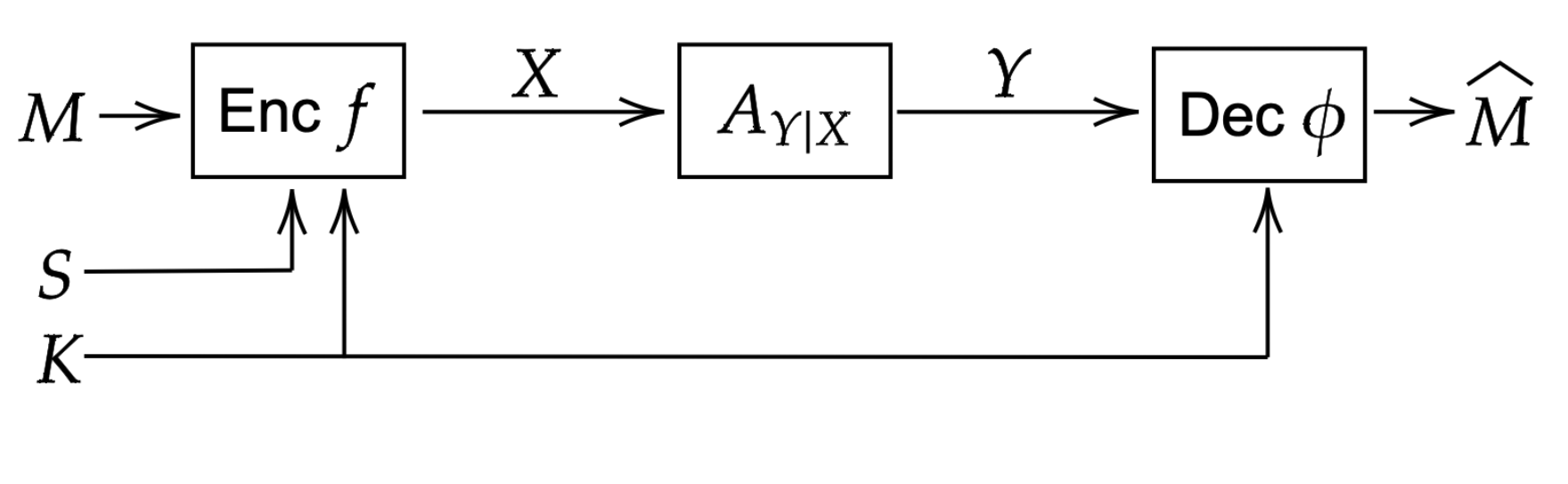}
    \caption{
    The information hiding problem. 
    }
    \label{fig:info_hiding}
\end{figure}
\else
\begin{figure*}[htpb]
    \centering
    \includegraphics[scale = 0.28]
    {images/info_hiding.pdf}
    \caption{
    The information hiding problem. 
    }
    \label{fig:info_hiding}
\end{figure*}
\fi

The one-shot information hiding problem is described in Figure~\ref{fig:info_hiding}. 
A message $M$ is uniformly chosen from the set $[1:\mathsf{L}]$, where $\mathsf{L}$ is the \emph{message size}. 
The goal is to \emph{hide} the message $M$ into a host data source $S\in \mathcal{S}$ to produce $X$. 
The stegotext $X$ is sent through the attack channel $A_{Y|X}$, which is chosen by an attacker that attempts to remove or degrade the embedded message during the signal transmission. 
We allow the \emph{common randomness} $K\in \mathcal{K}$ to be available at both the encoder and the decoder, but not the attacker, and the decoder attempts to reconstruct the original message $M$ after observing $Y,K$. 
The common randomness $K$ reveals information about $S$ to the decoder, which may be correlated with $K$ according to the joint distribution $P_{S,K}$.

Given the random variables, the information hiding problem can be viewed as a \emph{game} between two parties: 
the first party consists of the encoder (information hider) and the decoder, who are cooperatively transmitting the message $M$; 
the second party is an attacker, who is trying to remove or degrade the hidden message $M$ in $S$ so that the decoder cannot correctly reconstruct it. 
See~\cite{moulin2003information, somekh2003error, somekh2004capacity, cohen2002gaussian} for more discussions on the game-theoretic formulation. 
Their roles and assumptions are elaborated as follows.

\begin{itemize}
    \item \textbf{Encoder}: 
    The encoder observes a message $M$ that is uniformly chosen from the set $[1:\mathsf{L}]$, and the goal of encoding is to \emph{hide} $M$ into a host data source $S \in \mathcal{S}$ by introducing some tolerable level of distortions. 
    Given $S,K,M$, the encoder outputs $X=f(S,K,M)$, where $f:\mathcal{S} \times \mathcal{K} \times [1:\mathsf{L}] \to \mathcal{X}$. 
    It is expected that $X$ is close to $S$, in the sense that $d_1(S,X)$ is small, where $d_1:\mathcal{S} \times \mathcal{X} \to [0,\infty)$ is a distortion measure. 
    We want $d_1(S,X) \leq \mathsf{D}_1$ with high probability. This will be elaborated later.
    The encoded steogtext $X$ is then transmitted through the a channel $A_{Y|X}\in \mathcal{A}$.

    \item \textbf{Attacker}: The attacker is formulated as a noisy channel $A_{Y|X}$, called the  \emph{attack channel}. 
    With input $X$, it performs data processing attacks by introducing another level of distortion and produces $Y$, a corrupted version of $X$. 
    Unlike the asymptotic study~\cite{moulin2003information}, we do not assume the attacker's strategy is known by the encoder and the decoder (See~\ref{subsec::disc} for discussions). 
    The attacker is free to choose $A_{Y|X}$ from a class of channels $\mathcal{A}$ (e.g., the class of channels satisfying some distortion constraint between $X$ and $Y$, or the class of memoryless channels where $X$ and $Y$ are sequences), whose cardinality can be infinite. 
    Both deterministic and randomized attacks can be performed. 
    We assume the attacker has knowledge of the distributions (but not the values) of $S,M,K$, and also knows the code that the encoder-decoder team uses.

    \item \textbf{Decoder}: Upon observing the attacker's output $Y$ and the side information $K$, the decoder wishes to recover the message $M$. It outputs $\hat{M} = \phi(K,Y)$, where $\phi: \mathcal{K}\times \mathcal{Y}\rightarrow [1:\mathsf{L}]$. 
    The decoder is uninformed of the attacker's strategy. 
    We require the following worst case failure probability to be small: 
    \begin{equation}
    P_e := \sup_{A_{Y|X} \in \mathcal{A}}\mathbf{P}\big( d_1(S,X) > \mathsf{D}_1 \;\; \mathrm{OR} \;\; M \neq \hat{M}\big), \label{eq:hiding_fail}
    \end{equation}
    where $(S,K,M)\sim P_{S,K} \times \mathrm{Unif}[1:\mathsf{L}]$, $X=f(S,K,\allowbreak  M)$, $Y|X \sim A_{Y|X}$ and $\hat{M} = \phi(K,Y)$ in the probability.\footnote{Note that \cite{moulin2003information} imposes a constraint on the expected distortion $\mathbf{E}[d_1(S,X)]$, which is reasonable in the context of \cite{moulin2003information} because the memoryless assumption and the law of large numbers ensure that the actual distortion is close to the expected distortion. 
    Since we are considering a one-shot setting where we only assume the attack channel is chosen from a set $\mathcal{A}$, if constraint need to be specified, it might be more reasonable to consider $d_1(S,X) > \mathsf{D}_1$ as a failure event and bound the probability of failure, i.e., the excess distortion probability instead, compared to expected distortion.
    }
\end{itemize}

\subsection{One-shot Achievability Results}
\label{subsec::one_shot_gene_infohid_bd}

We then provide one-shot achievability results of the information hiding problem. 
Since we let the encoder-decoder team account for all possible attack channels in a set $\mathcal{A}$, the achievability results have to suffer a penalty depending on the ``size'' of $\mathcal{A}$.  
Though the cardinality of $\mathcal{A}$ could be infinite, we can often find a finite subset $\tilde{\mathcal{A}}$ such that every attack channel $A \in \mathcal{A}$ is close enough to some $\tilde{A} \in \tilde{\mathcal{A}}$. 
We capture this notion of size by the $\epsilon$-covering number defined below (see similar covering arguments in~\cite{blackwell1959capacity, moulin2003information}).

\begin{definition}
\label{def::cov_num}
Given a set of channels $\mathcal{A}$ from $\mathcal{X}$ to $\mathcal{Y}$, its $\epsilon$-\emph{covering number} is defined as
\ifshortver
\begin{align*}
N_{\epsilon}(\mathcal{A}) := \min \Big\{
& |\tilde{\mathcal{A}}|:\, \tilde{\mathcal{A}}\subseteq \mathcal{A},\,\,  \sup_{A \in \mathcal{A}} \,\, \min_{\tilde{A} \in \tilde{\mathcal{A}}} \,\,  \sup_{x \in \mathcal{X}}\\
& \qquad 
\left\Vert A_{Y|X}(\cdot|x) - \tilde{A}_{Y|X}(\cdot|x) \right\Vert_{\mathrm{TV}} \leq \epsilon\Big\},
\end{align*}
\else
\begin{align*}
N_{\epsilon}(\mathcal{A}) := \min \Big\{|\tilde{\mathcal{A}}|:\, \tilde{\mathcal{A}}\subseteq \mathcal{A},\,\,  \sup_{A \in \mathcal{A}} \,\, \min_{\tilde{A} \in \tilde{\mathcal{A}}} \,\,  \sup_{x \in \mathcal{X}} 
\left\Vert A_{Y|X}(\cdot|x) - \tilde{A}_{Y|X}(\cdot|x) \right\Vert_{\mathrm{TV}} \leq \epsilon\Big\},
\end{align*}
\fi
where $\Vert A_{Y|X}(\cdot|x) - \tilde{A}_{Y|X}(\cdot|x) \Vert_{\mathrm{TV}} \in [0,1]$ denotes the total variation distance between $A_{Y|X}(\cdot|x)$ (the distribution of $Y$ if $X = x$, and $Y$ follows $A_{Y|X}$) and $\tilde{A}_{Y|X}(\cdot|x)$.
\end{definition}

We now present the main result, which is a one-shot achievability result with a bound on the error probability in terms of $N_{\epsilon}(\mathcal{A})$ and the information density terms. 

\begin{theorem}
\label{thm:info_hiding_cover}
    Fix any $P_{U,X|S,K}$
    and channel $\hat{A}_{Y|X}$. 
    For any $\epsilon \geq 0$, there exists an information hiding scheme satisfying
    \begin{align*}
     P_e & \leq N_{\epsilon}(\mathcal{A})   \sup_{A_{Y|X} \in \mathcal{A}}    \mathbf{E}_{Y|X \sim A_{Y|X}}\Bigg[ 1 -  \mathbf{1}\{d_1(S,X)\leq \mathsf{D}_1  \} 
 \\
& \qquad \qquad \qquad   \cdot \Big(
1 + \mathsf{L}
2^{ -\hat{\iota}(U;Y|K) + \iota(U;S|K)}
\Big)^{-1}
\Bigg] + \epsilon ,
\end{align*}
where we assume $(S,K,U,X,Y)\sim P_{S,K} P_{U,X|S,K} A_{Y|X}$ in the expectation, and $\hat{\iota}(U;Y|K)$ is the information density computed by the joint distribution $P_{S,K} P_{U,X|S,K} \hat{A}_{Y|X}$ (instead of $A_{Y|X}$), assuming that $\iota(U;S|K) , \hat{\iota}(U;Y|K)$ are almost surely finite for every $A_{Y|X} \in \mathcal{A}$.
\end{theorem}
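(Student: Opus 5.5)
We follow the Gelfand--Pinsker-style Poisson matching construction of \cite{li2021unified}, with the decoder using the fixed surrogate channel $\hat{A}_{Y|X}$ in place of the unknown attack channel. The covering argument then pays a factor $N_\epsilon(\mathcal{A})$ and an additive $\epsilon$.

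\textbf{Scheme.} Let the encoder and decoder share a marked Poisson process $(\bar{U}_i,\bar{M}_i,T_i)_i$ with marks drawn i.i.d.\ from $P_U\times\mathrm{Unif}[1:\mathsf{L}]$; since $P_U$ may be replaced by any dominating reference, we effectively use $P_{U|K}$. This shared randomness is independent of the attacker. Given $(S,K,M)$, the encoder queries the process with the distribution $P_{U|S,K}(\cdot|S,K)\times\delta_M$, obtaining $\tilde{U}$. It then generates $X\sim P_{X|U,S,K}(\cdot|\tilde{U},S,K)$ with private randomness. Given $(Y,K)$, the decoder queries with $\hat{P}_{U|Y,K}(\cdot|Y,K)\times\mathrm{Unif}[1:\mathsf{L}]$, where $\hat{P}$ is the posterior under $P_{S,K}P_{U,X|S,K}\hat{A}_{Y|X}$. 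It outputs the message mark $\hat{M}$ of the returned sample.

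By the Poisson functional representation, $(S,K,\tilde{U},X)\sim P_{S,K}P_{U,X|S,K}$ regardless of $M$. This holds for every attack channel $A$, since $A$ acts only on $X$. The Radon--Nikodym derivative between the two query distributions at the encoder's sample is
\[
\mathsf{L}\,\frac{\mathrm{d}P_{U|S,K}}{\mathrm{d}\hat{P}_{U|Y,K}}(\tilde U)=\mathsf{L}\,2^{\iota(U;S|K)-\hat{\iota}(U;Y|K)}.
\]
Conditioning on $(S,K,M,\tilde{U},X,Y)$, the Poisson matching lemma gives
\[
\mathbf{P}(\hat{M}\neq M\mid\cdot)\le 1-\big(1+\mathsf{L}2^{-\hat\iota+\iota}\big)^{-1}.
\]
Here $Y$ depends on the process only through $\tilde U$ and independent noise. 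Combining with the union with the distortion event, $1-\mathbf{1}\{d_1\le\mathsf{D}_1\}\cdot(\ldots)^{-1}$ bounds the per-realization failure. Hence, for every $A\in\mathcal{A}$, the random-code failure probability $P_e(A)$ is at most
\[
g(A):=\mathbf{E}_A\big[1-\mathbf{1}\{d_1\le \mathsf{D}_1\}(1+\mathsf{L}2^{-\hat\iota+\iota})^{-1}\big].
\]

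\textbf{Derandomization via covering.} The bound above holds only on average over the Poisson process, which acts as shared randomness unavailable to the attacker. The attacker, however, knows the code, so we must fix one realization that works against all $A\in\mathcal{A}$ simultaneously. Take a minimal $\epsilon$-cover $\tilde{\mathcal{A}}$ with $|\tilde{\mathcal{A}}|=N_\epsilon(\mathcal{A})$. The per-realization error $e_c(\tilde A)\in[0,1]$ satisfies
\[
\mathbf{E}_c\Big[\sum_{\tilde A\in\tilde{\mathcal A}} e_c(\tilde A)\Big]\le N_\epsilon(\mathcal{A})\sup_{A\in\mathcal{A}} g(A).
\]
So there exists a realization $c^*$ of the process (part of the common randomness fixed into the code, together with a fixed choice of private randomness if needed) with
\[
\max_{\tilde A\in\tilde{\mathcal A}} e_{c^*}(\tilde A)\le\sum_{\tilde A} e_{c^*}(\tilde A)\le N_\epsilon(\mathcal{A})\sup_A g(A).
\]
For arbitrary $A\in\mathcal{A}$, pick $\tilde A$ in the cover with $\sup_x\|A(\cdot|x)-\tilde A(\cdot|x)\|_{\mathrm{TV}}\le\epsilon$. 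For the fixed code, the failure event is a measurable function of $(S,K,M,X,Y)$, and the law of $(S,K,M,X)$ does not depend on the channel. Coupling $Y$ under $A$ and $\tilde A$ conditionally on $X$ shows that the failure probabilities differ by at most $\epsilon$. This yields the claimed bound.

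\textbf{Main obstacle.} The delicate step is justifying the existence of a single code realization. The infinite Poisson process must be fixed, and any randomness in $P_{X|U,S,K}$ must be handled. Averaging is legitimate because all $e_c(\tilde A)$ are measurable in $c$, and each $e_c(\tilde A)$ already integrates over the source and channel. Encoder private randomness need not be fixed: the TV step only needs $X$'s law to be channel-independent.

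A second subtlety is the sup over $A$ in $g$. We bound the sum over the cover by $N_\epsilon(\mathcal{A})\sup_{A\in\mathcal{A}}g(A)$ and do not need $\tilde{\mathcal{A}}\subseteq\mathcal{A}$ beyond what the definition already gives. Finally, the almost-sure finiteness assumptions ensure the required absolute continuity, so the Poisson matching lemma applies.
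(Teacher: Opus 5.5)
Your proposal is correct and follows essentially the same route as the paper: Poisson functional representation encoding with $P_{U|S,K}(\cdot|S,K)\times\delta_M$, and decoding with the surrogate posterior $\hat{P}_{U|Y,K}(\cdot|Y,K)\times P_M$ computed under $\hat{A}_{Y|X}$. You then apply the conditional Poisson matching lemma and derandomize by summing the error over a minimal $\epsilon$-cover, transferring to arbitrary $A\in\mathcal{A}$ via the total variation bound; the only cosmetic difference is that you fix the codebook realization from the finite sum first, whereas the paper bounds $\mathbf{E}_{\mathcal{C}}[\sup_{A}P_e(A,\mathcal{C})]$ directly, which is equivalent.
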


\begin{proof}

The idea is that we design the decoder assuming that the attack channel is fixed to $\hat{A}_{Y|X}$, and prove that that this decoder works for every attack channel $A_{Y|X} \in \mathcal{A}$. 
Let $\mathcal{C} := ((\bar{U}_i, \bar{M}_i),T_i)_i$ where $(T_i)_i$ is a Poisson process, $\bar{U}_i \stackrel{\mathrm{iid}}{\sim} P_U$, and $\bar{M}_i \stackrel{\mathrm{iid}}{\sim} P_M$ (where $P_M = \mathrm{Unif}[1:\mathsf{L}]$). This will act as a random codebook shared between the encoder and the decoder (and this codebook will be fixed later).

The encoder observes the message $M\sim P_M$, the host signal $S$ and side information $K$, by the Poisson functional representation~\cite{li2018strong,li2021unified} on the distribution $P_{U|S,K}(\cdot | S, K)\times \delta_M$ over $\mathcal{U} \times [1:\mathsf{L}]$ it produces $U = \tilde{U}_{P_{U|S,K}(\cdot | S, K)\times \delta_M}$,\footnote{The Poisson functional representation produces a pair $(\tilde{U},\tilde{M})$, and $U$ is set to be the first component of the pair.} and sends the generated $X | (S,K,U) \sim P_{X|S,K,U}$.
The decoder observes $Y,K$ and outputs $\hat{M} = \tilde{M}_{\hat{P}_{U|Y,K}(\cdot | Y,K) \times P_M} $ by the Poisson functional representation, where $\hat{P}_{U|Y,K}$ is the conditional probability distribution computed by the joint distribution $P_{S,K} P_{U,X|S,K} \hat{A}_{Y|X}$. When the attack channel is $A_{Y|X} \in \mathcal{A}$, the error probability is bounded as follows: 
\begin{align*}
& P_e(A) := 1 - \mathbf{P}_{Y|X \sim A_{Y|X}}\big( d_1(S,X)\leq \mathsf{D}_1  \;\;  \mathrm{ AND } \;\; M= \hat{M}  \big)\\
& = \mathbf{E}
\bigg[ 1 - \mathbf{1}\{d_1(S,X)\leq \mathsf{D}_1  \}  \cdot \mathbf{1}\{M = \hat{M} \} 
\bigg] \\
& =\mathbf{E}
\bigg[ 1 - \mathbf{1}\{d_1(S,X)\leq \mathsf{D}_1  \} \cdot \mathbf{P}\big(M = \hat{M} | M,S,U,Y,K \big) \bigg]  \\
& \leq\mathbf{E}
\bigg[ 1 - \mathbf{1}\{d_1(S,X)\leq \mathsf{D}_1  \}  \\
& \qquad \cdot 
\mathbf{P}\big((U,M) = (\tilde{U}, \tilde{M})_{\hat{P}_{U|Y,K}(\cdot|Y,K)\times P_M} | M,S,U,Y,K\big) \bigg]  \\
& \stackrel{(a)}{\leq} \mathbf{E}
\bigg[ 1 - \mathbf{1}\{d_1(S,X)\leq \mathsf{D}_1  \}   \\
&\qquad \cdot
\Big(
1 +  \frac{\mathrm{d} P_{U|S,K} (\cdot | S,K)\times \delta_M}{\mathrm{d}  \hat{P}_{U|Y,K}(\cdot | Y, K)\times P_M }
(U, M)
\Big)^{-1} \bigg]  \\
& =
\mathbf{E}\bigg[
1 -  \mathbf{1}\{d_1(S,X) \leq  \mathsf{D}_1  \}  \Big(
1 + \mathsf{L}
2^{ -\hat{\iota}(U;Y|K) + \iota(U;S|K)}
\Big)^{  -1}
\bigg] \\
& \leq 
\sup_{A_{Y|X} \in \mathcal{A}}\mathbf{E}_{Y|X \sim A_{Y|X}}\bigg[
1 -  \mathbf{1}\{d_1(S,X)\leq \mathsf{D}_1  \} \\
& \qquad   \cdot \Big(
1 + \mathsf{L}
2^{ -\hat{\iota}(U;Y|K) + \iota(U;S|K)}
\Big)^{-1}
\bigg] \quad =: \overline{P_e},
\end{align*}
where $(a)$ is by the Poisson matching lemma.\footnote{The Poisson matching lemma is applied on the conditional distributions given $M,S,U,Y,K$. Also see the conditional Poisson matching lemma~\cite{li2021unified}.} 
If we allow the encoder and the decoder to share unlimited additional common randomness, we can assume the codebook $\mathcal{C} = ((\bar{U}_i, \bar{M}_i),T_i)_i$ is actually shared, and conclude that $P_e = \sup_{A \in \mathcal{A}}P_e(A) \leq \overline{P_e}$. Nevertheless, the only actual common randomness between the encoder and the decoder is $K$, which we cannot control. Therefore, we have to fix the codebook.

Let $P_e(A,c)$ be the probability of error when the attack channel is $A$ and the codebook is $\mathcal{C}=c$. We have $P_e(A) = \mathbf{E}_{\mathcal{C}}[P_e(A,\mathcal{C})]$. 
Let $\tilde{\mathcal{A}}\subseteq \mathcal{A}$ attain the minimum in $N_{\epsilon}(\mathcal{A})$.

Consider any $A \in \mathcal{A}$, and let $\tilde{A} \in \tilde{\mathcal{A}}$ satisfy 
\begin{equation*}
    \sup_{x \in \mathcal{X}}
    \left \Vert A_{Y|X} (\cdot|x) - \tilde{A}_{Y|X}(\cdot|x) 
    \right \Vert_{\mathrm{TV}} \leq \epsilon.
\end{equation*} 
The total variation distance between the joint distribution of $M,S,K,U,X,Y$ under the attack channel $A$ conditional on $\mathcal{C}=c$ and the joint distribution under the attack channel $\tilde{A}$ conditional on $\mathcal{C}=c$ is also bounded by $\epsilon$. Hence $|P_e(A,c) - P_e(\tilde{A},c)| \leq \epsilon$ and
\begin{align*}
 P_e(A,c) &\leq P_e(\tilde{A},c)+\epsilon \\
&\leq \sum_{\tilde{A}\in \tilde{\mathcal{A}}}P_e(\tilde{A},c)+\epsilon.
\end{align*}  
Therefore,
\begin{align*}
 \mathbf{E}_{\mathcal{C}}\Big[\sup_{A\in \mathcal{A}} P_e(A,\mathcal{C})\Big] &  \leq \mathbf{E}_{\mathcal{C}}\Big[\sum_{\tilde{A}\in \tilde{\mathcal{A}}}P_e(\tilde{A},\mathcal{C})+\epsilon\Big]  \\
& = \sum_{\tilde{A}\in \tilde{\mathcal{A}}} P_e(\tilde{A})+\epsilon \\
& \leq |\tilde{\mathcal{A}}| \cdot \overline{P_e} +\epsilon.
\end{align*} 
The proof is completed by the existence of a codebook $c$ such that $\sup_{A\in \mathcal{A}} P_e(A,c) \leq |\tilde{\mathcal{A}}| \cdot \overline{P_e} +\epsilon$.
\end{proof}

\medskip

\begin{remark}
\label{remark::recover_GP}
    Note that when $K=\emptyset$, $d_1(s,x)=0$, and $\mathcal{A}=\{A_{Y|X}\}$ is a singleton set, taking $\hat{A}_{Y|X}=A_{Y|X}$, Theorem~\ref{thm:info_hiding_cover} recovers the one-shot Gelfand-Pinsker coding result in~\cite{li2021unified}, which is the only known one-shot bound that attains  the best known second order result in~\cite{scarlett2015dispersions}. 
    The asymptotic Gelfand-Pinsker capacity~\cite{gelfand1980coding,heegard1983capacity} can be readily recovered. 
\end{remark}

\begin{remark}
The one-shot achievability results can be converted to a finite blocklength result where $n$ is a fixed number using the Berry-Esseen theorem~\cite{berry1941accuracy, esseen1942liapunov, feller1971introduction}. 
For example, see~\cite[Section IV]{li2021unified} for converting the one-shot result of the Gelfand-Pinsker problem to a second-order result by the Berry-Esseen theorem~\cite{berry1941accuracy, esseen1942liapunov, feller1971introduction}, which can also be recovered by Theorem~\ref{thm:info_hiding_cover}. 
\end{remark}

\subsection{Recovery of the Asymptotic Information Hiding Capacity}
\label{sec::recovery_hiding}

In this subsection, we show that our Theorem~\ref{thm:info_hiding_cover} recovers the asymptotic information hiding capacity~\cite{moulin2003information} when applied to the discrete and memoryless setting, with the attack channels subject to a distortion constraint. 
This gives an alternative proof to~\cite{moulin2003information}.
Note that, from a similar procedure, we can recover the achievable bound for information hiding with stegotext reconstruction~\cite[Theorem 1]{xu2023information} (which, in turn, is an extension of~\cite{grover2015information} and~\cite{sumszyk2009information}) by using Theorem~\ref{thm:info_hiding_cover2}. 
We choose to present the details of Theorem~\ref{thm:info_hiding_cover} recovering the information hiding capacity~\cite{moulin2003information} for simplicity, as it already captures the core idea.

We first provide a simple bound on the $\epsilon$-covering number in the case that $X$ and $Y$ are discrete and finite.

\begin{proposition}
\label{prop:covering_bound}

If $\mathcal{X}$ and $\mathcal{Y}$ are finite, then 
\[
N_{\epsilon}(\mathcal{A})\leq \Big(\frac{1}{2\epsilon} + \frac{|\mathcal{Y}|+1}{2}\Big)^{|\mathcal{X}|\cdot|\mathcal{Y}|}.
\]
\end{proposition}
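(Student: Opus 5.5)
The plan is to cover the set of all channels from $\mathcal{X}$ to $\mathcal{Y}$ by finitely many small ``cells''. From each nonempty cell we pick one representative lying in $\mathcal{A}$; this is the standard way to make the covering subset $\tilde{\mathcal{A}}$ a subset of $\mathcal{A}$, as Definition~\ref{def::cov_num} requires. The number of nonempty cells then upper bounds $N_{\epsilon}(\mathcal{A})$, provided any two channels in the same cell are within $\epsilon$ in the row-wise sup-TV metric.

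\textbf{Constructing the cells.} A channel is a matrix $(A(y|x))_{x\in\mathcal{X},y\in\mathcal{Y}}\in[0,1]^{|\mathcal{X}|\cdot|\mathcal{Y}|}$ whose rows are probability vectors. Fix a step $h>0$. For each entry we record the index $\lfloor A(y|x)/h\rfloor$, and a cell is a full vector of these indices. Each entry lies in $[0,1]$, so it takes at most $\lfloor 1/h\rfloor+1$ index values. Hence there are at most $(1/h+1)^{|\mathcal{X}|\cdot|\mathcal{Y}|}$ cells, which gives the product structure with exponent $|\mathcal{X}|\cdot|\mathcal{Y}|$ in the claimed bound.

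\textbf{Within-cell distance.} Take $A,A'$ in the same cell and fix $x$. Then every difference $\Delta_y := A(y|x)-A'(y|x)$ satisfies $|\Delta_y|<h$, and $\sum_y \Delta_y=0$. The TV distance is $\sum_y(\Delta_y)^+=\sum_y(\Delta_y)^-$. Let $p$ be the number of positive coordinates and $n$ the number of negative ones, so $p+n\le|\mathcal{Y}|$. Each of the two sums is below $h$ times its number of terms, so
\[
\Vert A(\cdot|x)-A'(\cdot|x)\Vert_{\mathrm{TV}} < h\min(p,n)\le h|\mathcal{Y}|/2 .
\]
Choosing $h$ so that this is at most $\epsilon$ makes every representative $\epsilon$-close, uniformly over $x$, to all members of its cell. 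This proves the covering property.

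\textbf{Main obstacle: the constant.} The main obstacle is matching the exact constant $\frac{1}{2\epsilon}+\frac{|\mathcal{Y}|+1}{2}$. The naive choice $h=2\epsilon/|\mathcal{Y}|$ only yields $\big(\frac{|\mathcal{Y}|}{2\epsilon}+1\big)^{|\mathcal{X}|\cdot|\mathcal{Y}|}$, which carries an extra factor of $|\mathcal{Y}|$ on the $1/(2\epsilon)$ term and is weaker than the stated bound for small $\epsilon$. I would first try to sharpen both the cell count and the within-cell estimate.
\begin{itemize}
\item \emph{Cell count.} Use a grid with offset, e.g.\ rounding $A(y|x)$ to the nearest multiple of $h$ rather than flooring. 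The additive term $\frac{|\mathcal{Y}|+1}{2}$ suggests counting lattice points in $[-h/2,1+h/2]$ along each coordinate.
\item \emph{Within-cell TV.} Exploit more carefully the constraint that the rows sum to one, instead of bounding every coordinate difference by $h$. The aim is a step $h$ of order $2\epsilon$, not $2\epsilon/|\mathcal{Y}|$, along the lines of the estimate $h\min(p,n)$ above.
\end{itemize}
If these refinements still fall short, I would check whether the intended argument bounds the positive part of $\Delta$ alone.
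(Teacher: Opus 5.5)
Your proposal does not prove the proposition. What you actually establish is $N_{\epsilon}(\mathcal{A})\le\big(\frac{|\mathcal{Y}|}{2\epsilon}+1\big)^{|\mathcal{X}|\cdot|\mathcal{Y}|}$. Since $\frac{|\mathcal{Y}|}{2\epsilon}+1-\big(\frac{1}{2\epsilon}+\frac{|\mathcal{Y}|+1}{2}\big)=\frac{|\mathcal{Y}|-1}{2}\big(\frac{1}{\epsilon}-1\big)$, this is strictly weaker for every $\epsilon<1$ once $|\mathcal{Y}|\ge 2$ (for $\epsilon\ge 1$ the statement is trivial). So the step you postpone as ``the main obstacle'' is the whole content of the statement, and the fixes you sketch cannot deliver it.

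Rounding instead of flooring leaves the leading $1/h$ term of the per-coordinate count unchanged. No use of the row-sum constraint lets you take $h$ of order $2\epsilon$ with axis-aligned cells either. Two probability vectors in one cell can differ by almost $+h$ on $\lfloor|\mathcal{Y}|/2\rfloor$ coordinates and almost $-h$ on as many others, so a cell's TV-diameter really is about $h\lfloor|\mathcal{Y}|/2\rfloor$.

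Your count loses in two ways per input letter. Counting all cells of $[0,1]^{|\mathcal{Y}|}$ ignores that rows lie in the simplex, which costs a factor of order $|\mathcal{Y}|!$. Cubes are also the wrong shape for the $\ell_1$ (TV) metric, which costs a further factor of order $|\mathcal{Y}|^{|\mathcal{Y}|}/|\mathcal{Y}|!$. Even removing the first loss by counting only cells that meet the simplex does not give the stated constant in general. For $|\mathcal{Y}|=10$, $\epsilon=0.1$, $h=\epsilon/5$, there are $\binom{60}{10}-\binom{50}{10}\approx 6.5\times 10^{10}$ such cells per input letter, versus $10.5^{10}\approx 1.6\times 10^{10}$ in the claimed bound.

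The missing idea is a packing/volume argument with $\ell_1$-balls, which is what the paper uses. Build $\tilde{\mathcal{A}}\subseteq\mathcal{A}$ greedily by adding any channel not yet within $\epsilon$ of $\tilde{\mathcal{A}}$. This gives a cover inside $\mathcal{A}$ whose elements are pairwise more than $\epsilon$ apart. Hence the sets $\{A:\max_x\sum_y|A_{y,x}-\tilde{A}_{y,x}|\le\epsilon\}$, $\tilde{A}\in\tilde{\mathcal{A}}$, are disjoint in $\mathbb{R}^{|\mathcal{Y}|\times|\mathcal{X}|}$. Each has volume $\big((2\epsilon)^{|\mathcal{Y}|}/|\mathcal{Y}|!\big)^{|\mathcal{X}|}$. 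Each lies in $\{A:\min_{x,y}A_{y,x}\ge-\epsilon,\ \max_x\sum_y A_{y,x}\le 1+\epsilon\}$, whose volume is $\big((1+(|\mathcal{Y}|+1)\epsilon)^{|\mathcal{Y}|}/|\mathcal{Y}|!\big)^{|\mathcal{X}|}$. The factorials cancel and the ratio is exactly $\big(\frac{1}{2\epsilon}+\frac{|\mathcal{Y}|+1}{2}\big)^{|\mathcal{X}|\cdot|\mathcal{Y}|}$. The additive $\frac{|\mathcal{Y}|+1}{2}$ comes from enlarging each of the $|\mathcal{Y}|$ coordinates and the column sum by $\epsilon$.

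Your weaker bound still gives $N_{\epsilon/n}(\mathcal{A}(P_X))=O\big((n/\epsilon)^{|\mathcal{X}|\cdot|\mathcal{Y}|}\big)$, which is all the paper's asymptotic recovery uses. It is not, however, the proposition as stated.
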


\begin{proof}
Write 
\begin{equation*}
  d(A,\tilde{A}):= \sup_{x \in \mathcal{X}}\Vert A_{Y|X}(\cdot|x) - \tilde{A}_{Y|X}(\cdot|x) \Vert_{\mathrm{TV}}.   
\end{equation*}
We use the standard method to bound the covering number, where we
start with $\tilde{\mathcal{A}}=\emptyset$, and add $A\in\mathcal{A}$
not currently covered by $\tilde{\mathcal{A}}$ (i.e., $\min_{\tilde{A}\in\tilde{\mathcal{A}}}d(A,\tilde{A})>\epsilon$)
to $\tilde{\mathcal{A}}$ one by one until all of $\mathcal{A}$ is
covered. Note that every two different $\tilde{A},\tilde{A}'\in\tilde{\mathcal{A}}$
produced this way must satisfy $d(\tilde{A},\tilde{A}')>\epsilon$,
and hence the $(\epsilon/2)$-balls $\{A:\,d(A,\tilde{A})\le\epsilon/2\}$
must be disjoint for $\tilde{A}\in\tilde{\mathcal{A}}$.

We now treat $A_{Y|X}$ as a transition probability matrix $A \in \mathbb{R}^{|\mathcal{Y}|\times|\mathcal{X}|}$. We have \begin{align*}
    d(A,\tilde{A}) &= \frac{1}{2}\Vert A-\tilde{A}\Vert_1 \\
     &= \frac{1}{2} \max_x \sum_y |A_{y,x} - \tilde{A}_{y,x}|. 
\end{align*}
The volume of the ball $\{A \in \mathbb{R}^{|\mathcal{Y}|\times|\mathcal{X}|}: \,d(A,\tilde{A})\le\epsilon/2\}$ (i.e., its Lebesgue measure in the space $\mathbb{R}^{|\mathcal{Y}|\cdot |\mathcal{X}|}$)
is $((2\epsilon)^{|\mathcal{Y}|}/(|\mathcal{Y}|!))^{|\mathcal{X}|}$, and all these balls are subsets of $\{A \in \mathbb{R}^{|\mathcal{Y}|\times|\mathcal{X}|}: \min_{x,y}A_{y,x}\ge-\epsilon,\,\max_x \sum_{y}A_{y,x}\le1+\epsilon\}$,
which has a volume $((1+(|\mathcal{Y}|+1)\epsilon)^{|\mathcal{Y}|}/(|\mathcal{Y}|!))^{|\mathcal{X}|}$. Hence, the size of $\tilde{\mathcal{A}}$ is upper-bounded by
\begin{align*}
&\frac{\Big(\big(1+(|\mathcal{Y}|+1)\epsilon \big)^{|\mathcal{Y}|}/(|\mathcal{Y}|!)\Big)^{|\mathcal{X}|}}{\big((2\epsilon)^{|\mathcal{Y}|}/(|\mathcal{Y}|!)\big)^{|\mathcal{X}|}} 
=\Big(\frac{1}{2\epsilon} + \frac{|\mathcal{Y}|+1}{2}\Big)^{|\mathcal{X}|\cdot|\mathcal{Y}|}.
\end{align*}
\end{proof}

We now show that Theorem~\ref{thm:info_hiding_cover} recovers the asymptotic result in \cite{moulin2003information} when $S,K,X,Y$ are finite and discrete, and the attack channel is memoryless and is subject to a distortion constraint.

Consider sequences $S^n=(S_1,\ldots,S_n)$, $K^n$, $X^n$, $Y^n$ where $(S_i,K_i)\stackrel{\mathrm{iid}}{\sim} P_{S,K}$. Consider a channel input distribution $P_X$. The class of attack channels $\mathcal{A}_n=\mathcal{A}_n(P_X)$ (which depends on $P_X$) is taken to be 
\begin{equation*}
    \mathcal{A}_n(P_X) := \big\{A_{Y|X}^n:\, A_{Y|X} \in \mathcal{A}(P_X) \big\},
\end{equation*}
and we let 
\begin{equation*}
    \mathcal{A}(P_X) := \big\{A_{Y|X}:\,  \mathbf{E}_{(X,Y) \sim P_X A_{Y|X}}[d_2(X,Y)] \leq \mathsf{D}_2\big\},
\end{equation*}
where $d_2: \mathcal{X} \times \mathcal{Y} \to [0,\infty)$ is a distortion measure, and $\mathsf{D}_2$ is the allowed distortion level. In other words, the attacker can only use memoryless channels $A_{Y|X}^n$ that satisfy the expected distortion constraint $\mathbf{E}[d_2(X,Y)] \leq \mathsf{D}_2$.
The asymptotic hiding capacity given in~\cite{moulin2003information} is
\begin{equation*}
	C = \max_{P_{U,X|S,K}} \min_{A_{Y|X}:\, \mathbf{E}[d_2(X,Y)] \leq \mathsf{D}_2} \big( I(U;Y|K) - I(U;S|K) \big).
\end{equation*}
where the maximum is over $P_{U,X|S,K}$ with $\mathbf{E}[d_1(S,X)]\le\mathsf{D}_{1}$.

We now show the achievability of the above asymptotic rate as a direct corollary of Theorem~\ref{thm:info_hiding_cover}. Fix $P_{U,X|S,K}$ which achieves the above maximum 
subject to $\mathbf{E}[d_1(S,X)]\le\mathsf{D}'_{1}$ where $\tilde{\mathsf{D}}_1 < \mathsf{D}_1$.
Take $\hat{A}_{Y|X}$ to be the minimizer of the rate-distortion function
$\min_{A_{Y|X}:\, \mathbf{E}[d_2(X,Y)] \leq \mathsf{D}_2} I(U;Y|K)$,
and assume $(S,K,U,X,Y)\sim P_{S,K} P_{U,X|S,K} \hat{A}_{Y|X}$.
Write the information density and mutual information obtained from this distribution as $\hat{\iota}_{U;Y|K}$ and $\hat{I}(U;Y|K)$, respectively. Fix a coding rate $R < \hat{I}(U;Y|K)- I(U;S|K)$. We want to show that this rate is achievable.

Consider any attack channel $A_{Y|X}$ with $\mathbf{E}[d_2(X,Y)] \leq \mathsf{D}_2$. 
Let $A_{Y|X}^{\lambda}:=(1-\lambda)\hat{A}_{Y|X}+\lambda A_{Y|X}$ for $0\leq \lambda\leq 1$. Write $I_{\lambda}(U;Y|K)$ for the mutual information computed assuming $Y|X \sim A_{Y|X}^{\lambda}$. It is straightforward to check that
\[
\frac{\mathrm{d}}{\mathrm{d}\lambda}I_{\lambda}(U;Y|K)\Big|_{\lambda=0}  =\mathbf{E}_{Y|X\sim A_{Y|X}}[\hat{\iota}(U;Y|K)]-\hat{I}(U;Y|K).
\]
By the optimality of $\hat{A}$, the above derivative is nonnegative, and hence $\mathbf{E}_{Y|X\sim A_{Y|X}}[\hat{\iota}(U;Y|K)] \geq \hat{I}(U;Y|K)$. Therefore, when we have i.i.d. sequences $(S^n,K^n,U^n,X^n,Y^n)\sim P^n_{S,K} P^n_{U,X|S,K} A^n_{Y|X}$ and $\mathsf{L}=\lfloor 2^{nR}\rfloor$, by the law of large numbers, 
\begin{align*}
& \mathsf{L} 2^{-\hat{\iota}(U^n;Y^n|K^n)+\iota(U^n;S^n|K^n)} \\
& \leq 2^{nR-\sum_{i=1}^n (\hat{\iota}(U_i;Y_i|K_i)-\iota(U_i;S_i|K_i))} \\
& \to \; 0
\end{align*}
exponentially as $n\to \infty$ since $\mathbf{E}[\hat{\iota}(U_i;Y_i|K_i)-\iota(U_i;S_i|K_i))] \geq \hat{I}(U;Y|K)-I(U;S|K) > R$. We also have $d_1(S^n,X^n) = \sum_{i=1}^n d_1(S_i,X_i) > n\mathsf{D}_1$ with probability approaching $0$ exponentially since $\tilde{\mathsf{D}}_1 < \mathsf{D}_1$. These convergences are uniform over all such attack channels $A_{Y|X}$ since the random variables are discrete and finite. 

Therefore, to bound $P_e$ using Theorem~\ref{thm:info_hiding_cover}, it is left to bound the $\epsilon$-covering number $N_{\epsilon}(\mathcal{A}_n(P_X))$. Note that 
\begin{align*}
    & \left \Vert A_{Y|X}^n(\cdot | x^n) - \tilde{A}_{Y|X}^n(\cdot | x^n) 
    \right \Vert_{\mathrm{TV}} \\
    & \leq \sum_{i=1}^n 
    \left \Vert A_{Y|X}(\cdot | x_i)-\tilde{A}_{Y|X}(\cdot | x_i) 
    \right \Vert_{\mathrm{TV}}, 
\end{align*}
and hence we can construct an $\epsilon$-cover of $\mathcal{A}_n(P_X)$ using an $(\epsilon/n)$-cover of $\mathcal{A}(P_X)$. Therefore, \begin{align*}
    & N_{\epsilon}(\mathcal{A}_n(P_X)) \\
    & \leq N_{\epsilon/n}(\mathcal{A}(P_X)) \\
    & = O((n/\epsilon)^{|\mathcal{X}|\cdot|\mathcal{Y}|})
\end{align*}
by Proposition~\ref{prop:covering_bound}, which grows much slower than the exponential decrease of the expectation in Theorem~\ref{thm:info_hiding_cover}. Therefore, taking $\epsilon =1/n$, we have $P_e \to 0$ as $n\to \infty$. Taking $\tilde{\mathsf{D}}_1 \to \mathsf{D}_1$ completes the proof.

\subsection{Discussions} 
\label{subsec::disc}

In~\cite{moulin2003information}, it is assumed that the attack channel must be memoryless, and the decoder can obtain full knowledge of the attack channel. 
This assumption can be reasonably justified by the large blocklength, as one can use training symbols at the beginning of transmission, with their size becoming negligible compared to the blocklength.
However, this assumption becomes questionable in the one-shot scenario, where the attack channel can be arbitrary and is only used once. 
We drop this assumption and allow the decoder to be totally uninformed of the attack channel.
In~\cite{somekh2004capacity} this assumption is also dropped, and an asymptotic hiding capacity expressed as the limit of a sequence of single-letter expressions has been derived using constant composition codes. The key difference between \cite{somekh2004capacity} and our setting is that the side information in \cite{somekh2004capacity} is a shared key of unlimited size independent of $M, S$ that can be chosen as a part of the coding scheme, whereas in our paper and \cite{moulin2003information} the $K$ is a given side information that may be correlated with $S$ (where the dependence is from the joint distribution $P_{S,K}$), and cannot be changed. 
In some watermarking problems~\cite{cox1997secure, hartung1999multimedia} certain components can be further constrained, e.g., there may exist a mapping from the message $M$ to a codeword $V(M)$ which is independent of $S$, and then composite data are obtained by a mapping from $S$, $K$ and $V(M)$.

The information hiding setting can be regarded as a variant of Gelfand-Pinsker coding for channels with side information at the encoder~\cite{gelfand1980coding,heegard1983capacity}, where the channel is fixed and not chosen by the attacker, and there is no shared side information between the encoder and the decoder. 
Since the encoder and the decoder have to account for all possible attack channels, this can be regarded as a combination of Gelfand-Pinsker coding and compound channel~\cite{blackwell1959capacity, dobrushin1959optimum, wolfowitz1980simultaneous}. 
The analyses in~\cite{moulin2003information, somekh2004capacity} utilize techniques such as random binning, joint typicality decoding and constant composition codes, which are also commonly utilized in the asymptotic analyses of Gelfand-Pinsker coding~\cite{gelfand1980coding,scarlett2015dispersions}.
These techniques may not be suitable for our one-shot setting. 
Strong typicality and constant composition codes are inapplicable when the blocklength is $1$. While random binning can be applied to one-shot Gelfand-Pinsker coding~\cite{verdu2012non,yassaee2013technique,watanabe2015nonasymp}, it produces weaker results compared to the Poisson matching lemma~\cite{li2021unified}. 
To obtain tight one-shot bounds for information hiding, we utilize the Poisson matching lemma instead, which has been shown to perform well in various one-shot settings~\cite{li2021unified, liu2025one}.

\subsection{Generalizations}
\label{subsec::hiding_generalization}

We discuss some generalizations of the information hiding setting. 
As shown in Figure~\ref{fig:info_hiding}, we have a host signal $S$ available to the encoder and a side information $K$ available to both the encoder and the decoder. 
This can be generalized by letting the encoder have a side information source $S_{\mathrm{Enc}}$, the decoder have another side information source $S_{\mathrm{Dec}}$, and they possibly share a codebook $\mathcal{C}$. 
In this case, our setting (also~\cite{moulin2003information} and public watermarking~\cite{somekh2004capacity}) correspond to $S_{\mathrm{Enc}} = (S, K)$ and $S_{\mathrm{Dec}} = K$, while private watermarking~\cite{somekh2003error} corresponds to $S_{\mathrm{Enc}} = S_{\mathrm{Dec}} = S$. 
This setting has been investigated in~\cite{moulin2007capacity}. 
Moreover, in~\cite{grover2015information, xu2023information}, the decoder not only recovers the message $M$, but also lossily reconstructs the stegotext $X$, within another level of distortion $d_1'(X, \hat{X})$. 
See Figure~\ref{fig:generalized_info_hiding} for an illustration.

\ifshortver
\begin{figure}[htpb]
    \centering
    \includegraphics[scale = 0.3]
    {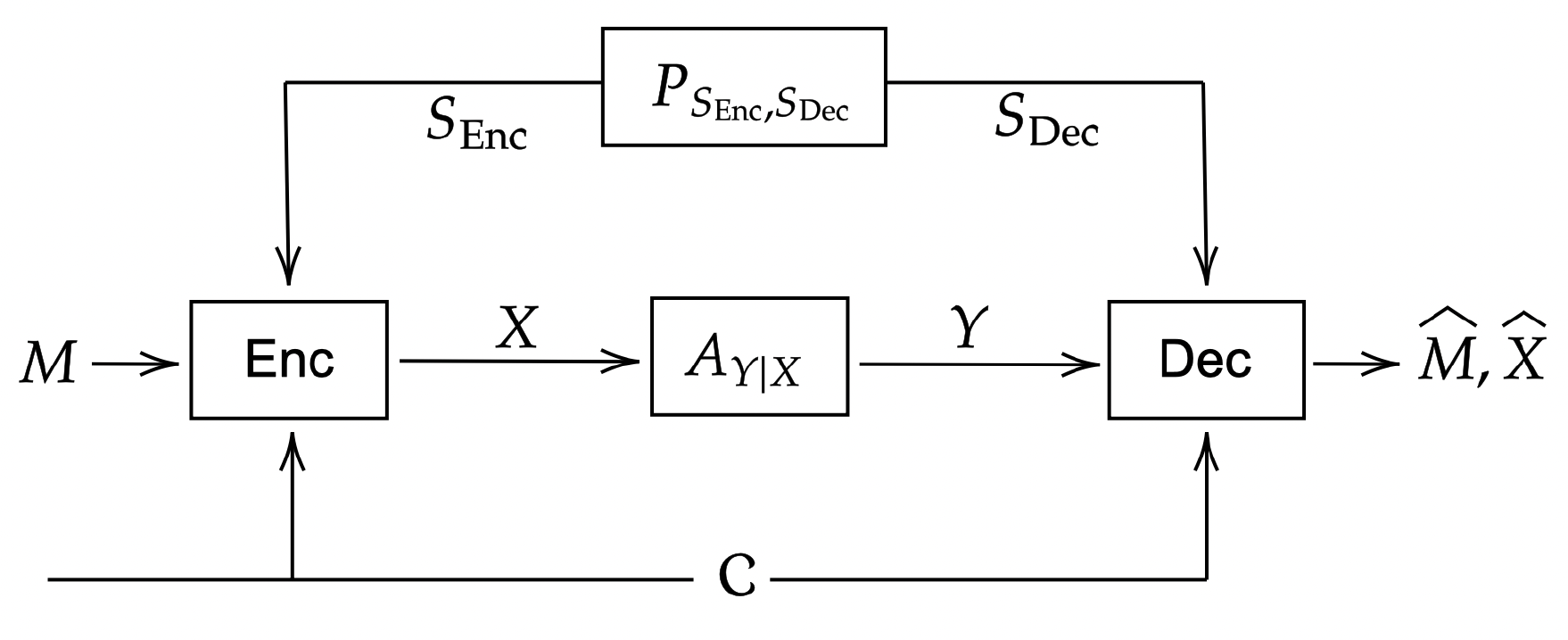}
    \caption{
    The generalized information hiding problem. 
    }
    \label{fig:generalized_info_hiding}
\end{figure}   
\else
\begin{figure*}[htpb]
    \centering
    \includegraphics[scale = 0.3]
    {images/generalized_info_hiding3.pdf}
    \caption{
    The generalized information hiding problem. 
    }
    \label{fig:generalized_info_hiding}
\end{figure*}   
\fi

By employing these generalizations, the information hiding setting can be extended as follows: Upon observing side information $S_{\mathrm{Enc}}$ and the message $M$, the encoding function $f: \mathcal{S}_{\mathrm{Enc}} \times [1:\mathsf{L}] \to \mathcal{X}$ outputs $X = f(S_{\mathrm{Enc}}, M)$; the decoder observes $Y$, the output of the attacker, together with another side information $S_{\mathrm{Dec}}$, and computes $(\hat{M}, \hat{X}) = \phi(S_{\mathrm{Dec}}, Y)$, the distorted versions of $M$ and $X$, where $\phi: \mathcal{S}_{\mathrm{Dec}} \times \mathcal{Y} \to [1:\mathsf{L}] \times \mathcal{X}$. 
To highlight the potential advantage of \emph{randomized} coding schemes, we allow both the encoding function $f$ and decoding function $g$ to depend on a random variable $\mathcal{C}$, which is shared between the encoder and decoder but unknown to the attacker. This random variable $\mathcal{C}$ serves as a source of common randomness (or equivalently, as a shared codebook).
The distortion of the stegotext $X$ is measured by $d_1'(X, \hat{X})$, where $d_1': \mathcal{X} \times \mathcal{X} \to [0, \infty)$ is another distortion measure. 
The decoder is uninformed of the attack channel, and we intend to bound the worst case failure probability
\ifshortver
\begin{align}
P_e := \sup_{A_{Y|X} \in \mathcal{A}}\mathbf{P}\big( 
& d_1(S,X) > \mathsf{D}_1 \;\; \mathrm{OR} \nonumber \\
& d_1'(X, \hat{X}) > \mathsf{D}_1' \;\; \mathrm{OR} \;\; M \neq \hat{M}\big), \label{eq:hiding_fail}
\end{align}
\else
\begin{equation}
P_e := \sup_{A_{Y|X} \in \mathcal{A}}\mathbf{P}\big( d_1(S,X) > \mathsf{D}_1 \;\; \mathrm{OR} \;\; d_1'(X, \hat{X}) > \mathsf{D}_1' \;\; \mathrm{OR} \;\; M \neq \hat{M}\big), \label{eq:hiding_fail}
\end{equation}
\fi
where we assume $(S_{\mathrm{Enc}},S_{\mathrm{Dec}},M)\sim P_{S_{\mathrm{Enc}},S_{\mathrm{Dec}}} \times \mathrm{Unif}[1:\mathsf{L}]$, $X=f(S_{\mathrm{Enc}},M)$, $Y|X \sim A_{Y|X}$ and $(\hat{M}, \hat{X}) = \phi(S_{\mathrm{Dec}}, Y)$ in the probability. 

Based on this setting, we can derive the following theorem, which is a generalized version of Theorem~\ref{thm:info_hiding_cover}. 
The proof is similar to that of Theorem~\ref{thm:info_hiding_cover} and is therefore omitted.

\begin{theorem}
\label{thm:info_hiding_cover2}
    Fix any $P_{U,X|S_{\mathrm{Enc}},S_{\mathrm{Dec}}}$
    and channel $\hat{A}_{Y|X}$. 
    For any $\epsilon \geq 0$, there exists a scheme for the information hiding problem satisfying
    \ifshortver
    \begin{align*}
    & P_e  \leq N_{\epsilon}(\mathcal{A})\,\cdot\, \sup_{A \in \mathcal{A}} \, \mathbf{E}_{Y|X \sim A}\Bigg[ 1 -  \mathbf{1}\{d_1(S_{\mathrm{Enc}}, X)\leq \mathsf{D}_1  \} 
     \\
    & \cdot \mathbf{1}\{d_1'(X, \hat{X})\leq \mathsf{D}_1'  \} \cdot \Big(
    1 + \mathsf{L} \cdot
    2^{ -\hat{\iota}(U; Y, S_{\mathrm{Dec}}) + \iota(U;S_{\mathrm{Enc}})}
    \Big)^{-1}
    \Bigg]  \\
    & \qquad \qquad \qquad \qquad \qquad \qquad \qquad \qquad \qquad \qquad \qquad + \epsilon,  
\end{align*}
\else
    \begin{align*}
    & P_e  \leq N_{\epsilon}(\mathcal{A})\,\cdot\, \sup_{A \in \mathcal{A}} \, \mathbf{E}_{Y|X \sim A}\Bigg[ 1 -  \mathbf{1}\{d_1(S_{\mathrm{Enc}}, X)\leq \mathsf{D}_1  \} 
     \\
    & \qquad  \qquad \cdot \mathbf{1}\{d_1'(X, \hat{X})\leq \mathsf{D}_1'  \} \cdot \Big(
    1 + \mathsf{L} \cdot
    2^{ -\hat{\iota}(U; Y, S_{\mathrm{Dec}}) + \iota(U;S_{\mathrm{Enc}})}
    \Big)^{-1}
    \Bigg] + \epsilon,  
\end{align*}
\fi
where we assume $(S_{\mathrm{Enc}} ,S_{\mathrm{Dec}},U,X,Y)\sim P_{S_{\mathrm{Enc}} ,S_{\mathrm{Dec}}} P_{U,X|S_{\mathrm{Enc}}} A_{Y|X}$ in the expectation, and $\hat{\iota}(U; Y, S_{\mathrm{Dec}})$ is the information density computed by the joint distribution $P_{S,K} P_{U,X|S,K} \hat{A}_{Y|X }$ (instead of $A_{Y|X}$), assuming that $\iota(U;S_{\mathrm{Enc}}), \hat{\iota}(U; Y, S_{\mathrm{Dec}})$ are almost surely finite for every $A_{Y|X} \in \mathcal{A}$.
\end{theorem}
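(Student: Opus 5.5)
The plan is to mirror the proof of Theorem~\ref{thm:info_hiding_cover}: first analyze a scheme that uses a shared random codebook, with the decoder designed for the nominal channel $\hat{A}_{Y|X}$. Then derandomize the codebook using the covering argument, which costs the factor $N_{\epsilon}(\mathcal{A})$ and the additive $\epsilon$.

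\textbf{Random-codebook scheme.} Let $\mathcal{C}=((\bar{U}_i,\bar{M}_i,\bar{X}_i),T_i)_i$ with $\bar{U}_i\sim P_U$, $\bar{M}_i\sim\mathrm{Unif}[1:\mathsf{L}]$, and $(T_i)_i$ a rate-one Poisson process. Including a component $\bar{X}_i$ is optional; we can instead let $\hat{X}$ be a deterministic function of the decoded $U$ and $S_{\mathrm{Dec}}$, and $Y$.
\begin{itemize}
\item \emph{Encoder.} From $S_{\mathrm{Enc}}$ and $M$, apply the Poisson functional representation to $P_{U|S_{\mathrm{Enc}}}(\cdot|S_{\mathrm{Enc}})\times\delta_M$ to obtain $U$, then generate $X\sim P_{X|S_{\mathrm{Enc}},U}$.
\item \emph{Decoder.} Query the same process with $\hat{P}_{U|Y,S_{\mathrm{Dec}}}(\cdot|Y,S_{\mathrm{Dec}})\times P_M$ to obtain $(\tilde{U},\hat{M})$. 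This posterior is computed under $\hat{A}_{Y|X}$. Output $\hat{X}$ as the reconstruction attached to $\tilde{U}$, for example via a designated reconstruction map $\hat{X}=g(\tilde{U},Y,S_{\mathrm{Dec}})$ built into $P_{U,X|\cdot}$.
\end{itemize}

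\textbf{Error bound for a fixed channel.} Fix any $A\in\mathcal{A}$. A success occurs whenever three things hold: $d_1(S_{\mathrm{Enc}},X)\le\mathsf{D}_1$, $d_1'(X,\hat{X})\le\mathsf{D}_1'$ with $\hat{X}$ evaluated at the true $U$, and the decoder's query matches, i.e., $(\tilde{U},\hat{M})=(U,M)$. The first two indicators are measurable with respect to $(M,S_{\mathrm{Enc}},S_{\mathrm{Dec}},U,X,Y)$. So we condition on these variables and apply the Poisson matching lemma, exactly as in step $(a)$ of the earlier proof. The Radon--Nikodym derivative of the two query distributions at $(U,M)$ equals $\mathsf{L}\,2^{\iota(U;S_{\mathrm{Enc}})-\hat{\iota}(U;Y,S_{\mathrm{Dec}})}$. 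Averaging and taking the supremum over $A$ gives the expectation term of the theorem, call it $\overline{P_e}$, as a bound on $\mathbf{E}_{\mathcal{C}}[P_e(A,\mathcal{C})]$.

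\textbf{Derandomization.} Take $\tilde{\mathcal{A}}$ achieving $N_{\epsilon}(\mathcal{A})$. For any $A$, pick $\tilde{A}$ in this cover that is $\epsilon$-close uniformly in $x$. Conditional on $\mathcal{C}=c$, the joint laws of all variables under $A$ and under $\tilde{A}$ differ in total variation by at most $\epsilon$, because both are obtained by passing the same $X$ through the channel. Hence
\[
\sup_A P_e(A,c)\le\sum_{\tilde{A}\in\tilde{\mathcal{A}}}P_e(\tilde{A},c)+\epsilon.
\]
Taking $\mathbf{E}_{\mathcal{C}}$ gives the bound $N_{\epsilon}(\mathcal{A})\,\overline{P_e}+\epsilon$, so some fixed codebook $c$ achieves it.

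\textbf{Main obstacle.} I expect the only delicate step to be the treatment of $\hat{X}$. It must be a function of what the decoder knows, and the indicator $\mathbf{1}\{d_1'(X,\hat{X})\le\mathsf{D}_1'\}$ must be evaluated on the matching event, where the decoded $\tilde{U}$ equals $U$. I would handle this by taking $U$ rich enough (e.g., containing the reconstruction, or setting $\hat{X}=g(U,Y,S_{\mathrm{Dec}})$), so that on the matching event the indicator depends only on the true variables.
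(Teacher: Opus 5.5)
Your proposal is correct and is the argument the paper intends. The paper omits this proof as being the same as that of Theorem~\ref{thm:info_hiding_cover}: encode by the Poisson functional representation with $P_{U|S_{\mathrm{Enc}}}(\cdot|S_{\mathrm{Enc}})\times\delta_M$, decode with $\hat{P}_{U|Y,S_{\mathrm{Dec}}}(\cdot|Y,S_{\mathrm{Dec}})\times P_M$, bound the error by the conditional Poisson matching lemma, and derandomize over an $\epsilon$-cover. Your choice $\hat{X}=g(U,Y,S_{\mathrm{Dec}})$, which makes the $d_1'$ indicator depend only on the true variables on the matching event, is the natural way to pin down what the statement leaves unspecified, and the extra codebook component $\bar{X}_i$ is unnecessary.
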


\section{One-shot Compound Wiretap Channels}
\label{sec::wiretap}

In this section, we consider the compound wiretap channel~\cite{liang2009compound} in the one-shot setting. 
We utilize the Poisson matching lemma~\cite{li2021unified}, under a framework similar to the one-shot information hiding discussed in Section~\ref{sec::generalized_info_hiding}. 
We provide novel one-shot achievablity results for the compound wiretap channel~\cite{liang2009compound}. 
To the best of our knowledge, the one-shot results of this problem has not been discussed in literature, though finite-blocklength bounds on the single wiretap channel (without uncertainties/multiple-states of the channel) can be found in~\cite{hayashi2006general, liu2016e_, yang2019wiretap}. 

Unlike the asymptotic analysis of the discrete memoryless compound wiretap channel~\cite{liang2009compound}, our one-shot results also apply to \emph{continuous} scenarios. 
Note that~\cite{schaefer2015secrecy} also studied the continuous case of compound wiretap channels, but the focus in~\cite{schaefer2015secrecy} was mainly on the compound Gaussian MIMO wiretap channels, and the analysis was not one-shot.

\subsection{Problem Formulation}
\label{subsec::comp_wiretap}

The one-shot compound wiretap channel setting is described as follows. 
A message $M$ is uniformly chosen from $\mathrm{Unif}[\mathsf{L}]$. 
Upon observing $M\sim \mathrm{Unif}[\mathsf{L}]$, the encoder produces $X = f(M)$, where $f:[\mathsf{L}]\rightarrow \mathcal{X}$ is a randomized encoding function. 
Then $X$ is sent through a channel $P_{Y,Z|X}$ that is unknown to the encoder and the decoder but known to the eavesdropper. 
A legitimate decoder observes $Y$ and recovers $\hat{M} = g(Y)$, where $g:\mathcal{Y}\rightarrow[\mathsf{L}]$ is a decoding function. 
The eavesdropper observes $Z\in \mathcal{Z}$. 
Justified by~\cite{liang2009compound} and~\cite[Lemma 1]{liang2008multiple}, we can assume the transition probability distribution is $P_{Y | X} P_{Z | X}$ by decomposing $P_{Y,Z|X}$ without loss of optimality.

In this paper, we do not fix $P_{Y|X}$, but rather allow $P_{Y | X}$ to be any element in $\mathcal{D}$, a set of channels to the legitimate decoder. 
Similarly, we allow $P_{Z | X}$ to be any element in $\mathcal{E}$, a set of channels to the eavesdropper. 
We assume the encoder and decoder have no knowledge of $P_{Y|X}$ and $P_{Z|X}$, making them \emph{universal} in the sense that their performance guarantees do not depend on specific channel realizations. 
For the secrecy guarantee, we make the conservative assumption that the eavesdropper knows both $P_{Y|X}$ and $P_{Z|X}$ that are in use.

Unlike~\cite{liang2009compound} but similar to our discussions on the information hiding problem, the set of channels to the legitimate decoder $\mathcal{D}$ and the set of channels to the eavesdropper $\mathcal{E}$ can be infinite, which captures the infinite variability of real-world signals and their propagation characteristics in practical applications. 
While the cardinalities of $\mathcal{D}, \mathcal{E}$ can be infinite, we can often find finite subsets $\tilde{\mathcal{D}}$, $\tilde{\mathcal{E}}$ such that every channel to the decoder (or eavesdropper) in $\tilde{\mathcal{D}}$ (or $\tilde{\mathcal{E}}$) would be close enough to some $\tilde{P}_{Y|X} \in \tilde{\mathcal{D}}$ (or $\tilde{P}_{Z|X} \in \tilde{\mathcal{E}}$). 
This idea has appeared in Section~\ref{sec::generalized_info_hiding} and also in~\cite{schaefer2015secrecy, blackwell1959capacity}.

The objective is to bound the worst case error probability
\begin{equation}
    P_e := \sup_{P_{Y|X}\in \mathcal{D}} \mathbf{P}\big(
    M\neq \hat{M}
    \big),
\end{equation}
while also ensure the secrecy is guaranteed, which is measured by the total variation distance
\begin{equation}
     \gamma := \sup_{P_{Z|X}\in \mathcal{E}} \left\Vert P_{M,Z} - P_M\times P_Z\right\Vert_{\mathrm{TV}}
\end{equation}
being small. 

\subsection{One-Shot Achievability}
\label{subsec::comp_wiretap}

We then provide one-shot achievability results of the compound wiretap channel. 
The result can be viewed as a combination of the covering argument appeared in Section~\ref{sec::generalized_info_hiding} and the one-shot soft covering lemma in~\cite[Proposition 3]{li2021unified}. 
Other existing one-shot wiretap channel results~\cite{hayashi2006general, liu2016e_} might also be utilized in a similar framework as well.

\begin{theorem}
\label{thm:compound_wiretap}
     Fix any $P_{U,X}$ and any wiretap channel $\hat{P}_{Y,Z|X} = \hat{P}_{Y|X} P_{Z|X}$. 
    For any $\nu \geq 0$, any $\epsilon_1, \epsilon_2 \geq 0$ and $\mathsf{A}, \mathsf{B}\in\mathbb{N}$, there exists a code for the compound wiretap channel setting, with message $M\sim \mathrm{Unif}[\mathsf{L}]$, satisfying 
    \ifshortver
    \begin{align}
        & P_e +\nu\cdot \gamma \nonumber \\
        & \leq N_{\epsilon_1}(\mathcal{D}) 
        \sup_{P_{Y|X}\in \mathcal{D}} 
        \mathbf{E}_{Y|X\sim P_{Y|X}} 
        \Big[
        \min\left\{
        \mathsf{L}\mathsf{A}2^{- \hat{\iota}(U;Y)}, 1
        \right\}
        \Big] + \epsilon_1  \nonumber \\
        & \quad + \nu\cdot N_{\epsilon_2}(\mathcal{E}) 
        \Bigg(
        \sup_{\tilde{P}_{Z|X}\in \mathcal{E}} 
        2 \cdot \mathbf{E}_{Z|X\sim \tilde{P}_{Z|X}} 
        \Big[ 
        \big(1 + 2^{-\iota(U;Z)} \big)^{-\mathsf{B}} 
        \Big]  \nonumber \\
        & \qquad \qquad \qquad + \sqrt{ \mathsf{B}\mathsf{A}^{-1}}
        \Bigg)
        +  \nu\cdot \epsilon_2, \label{eq::compound_wiretap_thm}
    \end{align}
    \else
    \begin{align}
        & P_e +\nu\cdot \gamma \nonumber \\
        & \leq N_{\epsilon_1}(\mathcal{D}) 
        \sup_{P_{Y|X}\in \mathcal{D}} 
        \mathbf{E}_{Y|X\sim P_{Y|X}} 
        \Big[
        \min\left\{
        \mathsf{L}\mathsf{A}2^{- \hat{\iota}(U;Y)}, 1
        \right\}
        \Big] + \epsilon_1  \label{eq::compound_wiretap_decode} \\
        & \quad + \nu\cdot N_{\epsilon_2}(\mathcal{E}) 
        \bigg(\,
        \sup_{\tilde{P}_{Z|X}\in \mathcal{E}} 
        2 \cdot \mathbf{E}_{Z|X\sim \tilde{P}_{Z|X}} 
        \Big[ 
        \big(1 + 2^{-\iota(U;Z)} \big)^{-\mathsf{B}} 
        \Big]   + \sqrt{ \mathsf{B}\mathsf{A}^{-1}}
        \,\bigg)
        +  \nu\cdot \epsilon_2, \label{eq::compound_wiretap_secrecy}
    \end{align}
    \fi
    where we assume $(U, X, Y, Z)\sim P_{U, X} P_{Y|X} P_{Z|X}$ in the expectation, and $\iota(U;Z), \hat{\iota}(U;Y)$ are the information densities computed for compound channels by the joint distribution $P_{U,X} \hat{P}_{Y|X} P_{Z|X}$, assuming that they are almost surely finite for every $P_{Y|X} \in \mathcal{D}, P_{Z|X} \in \mathcal{E}$. 
\end{theorem}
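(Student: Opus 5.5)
We combine the random-codebook construction of Theorem~\ref{thm:info_hiding_cover} with the one-shot soft covering lemma~\cite[Proposition 3]{li2021unified}. The code is a wiretap code with an auxiliary randomization index. Let $\mathcal{C} := ((\bar{U}_i,\bar{M}_i,\bar{W}_i),T_i)_i$ be a marked Poisson process with $\bar{U}_i \stackrel{\mathrm{iid}}{\sim} P_U$, $\bar{M}_i\sim\mathrm{Unif}[\mathsf{L}]$ and $\bar{W}_i\sim \mathrm{Unif}[\mathsf{A}]$. Given $M$, the encoder draws a local key $W\sim\mathrm{Unif}[\mathsf{A}]$ and queries the Poisson functional representation with $P_U\times\delta_{(M,W)}$ to obtain $U$. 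It then sends $X\sim P_{X|U}(\cdot|U)$. The decoder queries with $\hat{P}_{U|Y}(\cdot|Y)\times \mathrm{Unif}[\mathsf{L}]\times\mathrm{Unif}[\mathsf{A}]$ and outputs the $M$-component. Here $\hat{P}_{U|Y}$ is computed from $P_{U,X}\hat{P}_{Y|X}$, so the decoder is universal.

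\textbf{Reliability.} For a fixed $P_{Y|X}\in\mathcal{D}$ and a codebook averaged over, the Poisson matching lemma gives, exactly as in the proof of Theorem~\ref{thm:info_hiding_cover}, an error probability of at most
\[
\mathbf{E}\big[1-(1+\mathsf{L}\mathsf{A}2^{-\hat{\iota}(U;Y)})^{-1}\big]\le \mathbf{E}\big[\min\{\mathsf{L}\mathsf{A}2^{-\hat{\iota}(U;Y)},1\}\big].
\]
The Radon--Nikodym derivative here is $\mathrm{d}(P_U\times\delta_{(M,W)})/\mathrm{d}(\hat{P}_{U|Y}\times\mathrm{Unif})=\mathsf{L}\mathsf{A}2^{-\hat{\iota}(U;Y)}$.

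\textbf{Secrecy.} For a fixed $P_{Z|X}\in\mathcal{E}$ and a fixed $m$, the output distribution of $Z$ induced by the codebook is a soft covering of $P_Z$. The number of effective codewords is controlled by the index $W$: the points with mark $(m,w)$, $w\in[\mathsf{A}]$. We invoke~\cite[Proposition 3]{li2021unified}, whose bound has the form $2\mathbf{E}[(1+2^{-\iota(U;Z)})^{-\mathsf{B}}]+\sqrt{\mathsf{B}\mathsf{A}^{-1}}$ for the expected TV distance between $P_{Z|M=m,\mathcal{C}}$ and $P_Z$. Averaging over $m$ and using the triangle inequality then bounds $\mathbf{E}_{\mathcal{C}}\|P_{M,Z|\mathcal{C}}-P_M\times P_{Z|\mathcal{C}}\|_{\mathrm{TV}}$ by the same quantity, up to the factor already included. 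If necessary, we compare first to $P_M\times P_Z$ and then pass to $P_{Z|\mathcal{C}}$, absorbing the resulting constant into the stated $2$.

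\textbf{Derandomizing the codebook.} We follow the covering step of Theorem~\ref{thm:info_hiding_cover}. Let $\tilde{\mathcal{D}}$ and $\tilde{\mathcal{E}}$ attain $N_{\epsilon_1}(\mathcal{D})$ and $N_{\epsilon_2}(\mathcal{E})$. For a fixed codebook $c$:
\begin{itemize}
\item Any $P_{Y|X}\in\mathcal{D}$ is within $\epsilon_1$ of some element of $\tilde{\mathcal{D}}$ uniformly in $x$, so $P_e(P_{Y|X},c)\le\sum_{\tilde{P}\in\tilde{\mathcal{D}}}P_e(\tilde{P},c)+\epsilon_1$.
\item The secrecy TV functional is also $\epsilon_2$-Lipschitz under replacing $P_{Z|X}$ by a channel within $\epsilon_2$. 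The joint law of $(M,Z)$ changes by at most $\epsilon_2$ in TV, so both $P_{M,Z}$ and $P_M\times P_Z$ move by at most $\epsilon_2$. This gives a $2\epsilon_2$ bound, or $\epsilon_2$ after a careful data-processing argument on the product term. Reaching exactly $\epsilon_2$ is where we must be careful, and we expect this to be the main obstacle along with the soft-covering constant.
\end{itemize}
Hence
\[
\mathbf{E}_{\mathcal{C}}\Big[\sup_{\mathcal{D}}P_e+\nu\sup_{\mathcal{E}}\gamma\Big]\le\sum_{\tilde{\mathcal{D}}}\mathbf{E}_{\mathcal{C}}P_e+\epsilon_1+\nu\Big(\sum_{\tilde{\mathcal{E}}}\mathbf{E}_{\mathcal{C}}\gamma+\epsilon_2\Big).
\]
Bounding each summand by the supremum of the averaged bounds yields the right-hand side of \eqref{eq::compound_wiretap_secrecy}. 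A codebook $c$ achieving at most the expectation then exists, which proves the theorem. Combining the two criteria with weight $\nu$ is what allows a single codebook to serve both purposes.
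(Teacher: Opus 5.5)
Your construction and overall architecture match the paper's: a universal decoder built from $\hat P_{Y|X}$, private randomization of size $\mathsf{A}$, soft covering for each eavesdropper channel, then summing over the finite covers and fixing a good codebook. The one cosmetic difference is how you randomize. You add a uniform mark $\bar W_i\in[\mathsf{A}]$ and use the ordinary Poisson matching lemma. The paper instead takes the $A$-th point $\tilde U_{P_U\times\delta_M}(A)$ among the points with mark $M$ and invokes the generalized lemma. Your version gives the same reliability term, indeed $t/(1+t)$ with $t=\mathsf{L}\mathsf{A}2^{-\hat\iota(U;Y)}$ before relaxing to $\min\{t,1\}$. You should, however, state explicitly that the codewords $\tilde U_{P_U\times\delta_{(m,w)}}$, $w\in[\mathsf{A}]$, are i.i.d.\ $P_U$ by the marking/thinning property of the Poisson process, since that is exactly what Prop.~3 needs.

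Your constant bookkeeping in the secrecy step is muddled. In the paper the factor $2$ comes from $\Vert P_{Z|M,c}-P_{Z|c}\Vert_{\mathrm{TV}}\le\Vert P_{Z|M,c}-P_Z\Vert_{\mathrm{TV}}+\Vert P_{Z|c}-P_Z\Vert_{\mathrm{TV}}$ together with convexity. Prop.~3 is applied only afterwards, to the per-message term $\mathbf{E}\Vert\mathsf{A}^{-1}\sum_a P_{Z|U}(\cdot|U_a)-P_Z\Vert_{\mathrm{TV}}$. If that term were already bounded by $2\mathbf{E}[\cdot]+\sqrt{\mathsf{B}\mathsf{A}^{-1}}$, as you write, you would end with $4\mathbf{E}[\cdot]+2\sqrt{\mathsf{B}\mathsf{A}^{-1}}$, and there is nothing to ``absorb'' the extra factor into.

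The genuine issue is the one you flagged, and you should not look for a data-processing trick to reach $\epsilon_2$, because none exists. The paper asserts $|\gamma(P_{Z|X},c)-\gamma(\tilde P_{Z|X},c)|\le\epsilon_2$ on the grounds that the joint law moves by at most $\epsilon_2$. That inference is valid only for probabilities of events, as with $P_e$. Here $\gamma$ compares $P_{M,Z}$ with the product of its own marginals, and both move.

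A concrete counterexample: take $\mathsf{L}=3$, $\mathcal{X}=\{x_1,x_2,x_3\}$, and a deterministic encoder $m\mapsto x_m$. Let $\tilde P_{Z|X}(\cdot|x_m)=\mathrm{Unif}\{a_1,a_2,a_3\}$, and let $P_{Z|X}(\cdot|x_m)$ be uniform on $\{a_1,a_2,a_3,b_m\}\setminus\{a_m\}$. The two channels are within $1/3$ in the sup-TV sense. Yet $\gamma=0$ under $\tilde P_{Z|X}$, while $\gamma=4/9>1/3$ under $P_{Z|X}$.

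In general, write $\gamma=\sup_{f}\mathsf{L}^{-1}\sum_m\mathbf{E}_{P_{Z|M=m}}[f(m,Z)-\mathsf{L}^{-1}\sum_{m'}f(m',Z)]$ over $f$ with values in $[0,1]$. Each integrand has oscillation at most $2(1-1/\mathsf{L})$, so the Lipschitz constant of $\gamma$ is $2(1-1/\mathsf{L})$ rather than $1$, and the example attains it. Hence your argument, and the paper's, prove the theorem with $2\nu\epsilon_2$ in place of $\nu\epsilon_2$. Everything else goes through as you describe, and the discrepancy does not affect the asymptotic recovery, which uses $\epsilon_2=0$.
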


\begin{proof}

Since the encoder and the decoder are uninformed of the transmission channel, we first design our coding strategy assuming that the transmission channel to the legitimate decoder is fixed to $\hat{P}_{Y|X} \in \mathcal{D}$. 
The eavesdropper is aware of both the transmission channel and the eavesdropping channel $P_{Z|X}\in \mathcal{E}$ that are in use.

Let $\mathcal{C} := ((\bar{U}_i, \bar{M}_i),T_i)_i$ where $(T_i)_i$ is a Poisson process, $\bar{U}_i \stackrel{\mathrm{iid}}{\sim} P_U$, and $\bar{M}_i \stackrel{\mathrm{iid}}{\sim} P_M$ (where $P_M = \mathrm{Unif}[\mathsf{L}]$). 
This is a random codebook that is known to both the encoder and the decoder. 
We first analyze the performance averaged over the random codebook $\mathcal{C}$; and later we will fix the codebook. 
we analyze the error probability and secrecy as follows.

Let $A \sim \mathrm{Unif} [ \mathsf{A}]$ be independent of $(M, \mathcal{C})$. 
The encoder observes the message $M\sim P_M$, computes $U = \tilde{U}_{P_U \times\delta_M}(A)$ by the Poisson functional representation~\cite{li2018strong}, and sends the generated $X|U\sim P_{X|U}$. 
The decoder observes the channel ouptut $Y$ and recovers $\hat{M} = \tilde{M}_{\hat{P}_{U|Y}(\cdot |Y) \times P_M}$ where $\hat{P}_{U|Y}$ is the conditional distribution computed by the joint distribution $P_{U,X}\hat{P}_{Y|X}$.  
We have $(M, A, U, X, Y, Z)\sim P_M\times P_A\times P_{U,X}\hat{P}_{Y|X}\hat{P}_{Z|X}$. 
For the case of a fixed $\hat{P}_{Y|X} \in \mathcal{D}$, the error probability $ \mathbf{P}\big\{ M\neq \hat{M}\big\}$ in this case, denoted by $P_e(\hat{P}_{Y|X})$, can be bounded as follows:
\begin{align}
    & P_e(\hat{P}_{Y|X}) 
    \nonumber \\
    & \leq \mathbf{E}
    \bigg[\mathbf{P}\big((U,M) \neq (\tilde{U}, \tilde{M})_{\hat{P}_{U|Y}(\cdot|Y) \times P_M} | M, A, U, Y \big) \bigg] \nonumber \\
    & \stackrel{(a)}{\leq} \mathbf{E}
    \bigg[\min \Big\{ \mathsf{A} \frac{d P_U \times \delta_M}{d \hat{P}_{U|Y}(\cdot | Y) \times P_M} (U, M), 1  \Big\} \bigg] \nonumber \\
    & = \mathbf{E}
    \bigg[\min \Big\{ \mathsf{L} \mathsf{A} 2^{-\hat{\iota}_{U; Y}(U; Y)} , 1  \Big\} \bigg] \nonumber \\
    & \leq \sup_{P_{Y|X}\in \mathcal{D}} 
    \mathbf{E}_{Y|X \sim A_{Y|X}}\Big[
    \min \Big\{ \mathsf{L} \mathsf{A} 2^{-\hat{\iota}_{U; Y}(U; Y)} , 1  \Big\} 
    \Big] 
    \label{eq::P_e_comp_wiretap} \\
    & =: \overline{P_e}\nonumber 
\end{align}
where $(a)$ is by the conditional
generalized Poisson matching lemma~\cite{li2021unified}, and we define~\eqref{eq::P_e_comp_wiretap} to be $\overline{P_e}$.

For the secrecy measure $\gamma$, note again that the wiretap channel $P_{Z|X}$ is known to the eavesdropper, and for this choice of $P_{Z|X}$, for the total variation distance $\gamma(P_{Z|X})$ we can write:
\begin{align}
    & \mathbf{E} \Big[ \Big\Vert P_{M,Z|\mathcal{C}}(\cdot, \cdot | \mathcal{C}) - P_{M}\times P_{Z|\mathcal{C}}(\cdot | \mathcal{C}) \Big\Vert_{\mathrm{TV}} \Big]  \nonumber \\
    & = \mathbf{E} \Big[
    \Big\Vert
    P_{Z|M, \mathcal{C}}(\cdot, \cdot | \mathcal{C}) -  P_{Z|\mathcal{C}}(\cdot | \mathcal{C})
    \Big\Vert_{\mathrm{TV}}
    \Big] \nonumber \\
    & \leq \mathbf{E}\Big[
    \Big\Vert
    P_{Z|M, \mathcal{C}}(\cdot, \cdot | \mathcal{C}) - P_{Z}(\cdot)
    \Big\Vert_{\mathrm{TV}}
    \Big]   \nonumber \\
    & \qquad \qquad  + \mathbf{E}\Big[
    \Big\Vert
    P_{Z|\mathcal{C}}(\cdot | \mathcal{C}) - P_{Z}(\cdot)
    \Big\Vert_{\mathrm{TV}}
    \Big] \nonumber \\
    & \stackrel{(a)}{\leq} 
    2\cdot\mathbf{E}\Big[
    \Big\Vert
    P_{Z|M, \mathcal{C}}(\cdot, \cdot | \mathcal{C}) - P_{Z}(\cdot)
    \Big\Vert_{\mathrm{TV}}
    \Big]  \nonumber \\
    & = 2 \cdot \mathbf{E}\Big[
    \Big\Vert \mathsf{A}^{-1}
    \sum_{a=1}^\mathsf{A}
    P_{Z|U}(\cdot | \tilde{U}_{P_U\times \delta_M}(a) ) -  P_{Z}(\cdot)
    \Big\Vert_{\mathrm{TV}}
    \Big]  \nonumber \\
    & \stackrel{(b)}{\leq}   
    2\cdot\mathbf{E} 
    \Big[ 
    \big(1 + 2^{-\iota(U;Z)} \big)^{-\mathsf{B}} 
    \Big] 
    + \sqrt{ \mathsf{B}\mathsf{A}^{-1}}
     \nonumber \\
    & \leq \sup_{P_{Z|X}\in \mathcal{E}} 
    2 \cdot \mathbf{E}
    \Big[ 
    \big(1 + 2^{-\hat{\iota}(U;Z)} \big)^{-\mathsf{B}} 
    \Big]
    + \sqrt{ \mathsf{B}\mathsf{A}^{-1}} \label{eq::gamma_comp_wiretap} 
    \\
    & =: \bar{\gamma}
    \nonumber
\end{align}
where $(a)$ is by the convexity of total variation distance, $(b)$ is by~\cite[Proposition 3]{li2021unified} since $\{ \tilde{U}_{P_U\times \delta_m}(a) \}_{a\in[\mathsf{A}]} \stackrel{\mathrm{iid}}{\sim} P_U $ for any $m$, and we define~\eqref{eq::gamma_comp_wiretap} to be $\overline{\gamma}$.

Let $P_e(P_{Y|X}, c)$ denote be the probability of error when the legitimate channel is $P_{Y|X}$ and the codebook is $\mathcal{C} = c$ and also let $\gamma(P_{Z|X}, c)$ denote the total variation distance $\gamma$ when the wiretap channel is $P_{Z|X}$ and the codebook is $\mathcal{C} = c$.

Let $\tilde{\mathcal{D}} \subseteq \mathcal{D}$ attain the minimum in $N_{\epsilon_1}(\mathcal{D})$ and $\tilde{\mathcal{E}} \subseteq \mathcal{E}$ attain the minimum in $N_{\epsilon_2}(\mathcal{E})$. 
Consider any $P_{Y|X} \in \mathcal{D}$ and any $P_{Z|X} \in \mathcal{E}$, and let $\tilde{P}_{Y|X} \in \tilde{\mathcal{D}}, \tilde{P}_{Z|X} \in \tilde{\mathcal{E}}$ satisfy 
\begin{align*}
    & \sup_{x \in \mathcal{X}} 
    \left\Vert 
    P_{Y|X}(\cdot|x) - \tilde{P}_{Y|X}(\cdot|x) 
    \right\Vert_{\mathrm{TV}} 
    \leq \epsilon_1, \\
    & \sup_{x \in \mathcal{X}} 
    \left\Vert 
    P_{Z|X}(\cdot|x) - \tilde{P}_{Z|X}(\cdot|x) 
    \right\Vert_{\mathrm{TV}} 
    \leq \epsilon_2. 
\end{align*} 

The total variation distance between the joint distribution of $M,A,U,X,Y,Z$ under the channel $P_{Y|X}$ (or $P_{Z|X}$) conditional on $\mathcal{C} = c$ and the joint distribution under the channel $\tilde{P}_{Y|X}$ (or $\tilde{P}_{Z|X}$) conditional on $\mathcal{C} = c$ is also bounded by $\epsilon_1$ (or $\epsilon_2$). 
Therefore, we have 
$| P_e(P_{Y|X}, c) - P_e(\tilde{P}_{Y|X}, c) | \leq \epsilon_1$ and
$| \gamma(P_{Z|X}, c) - \gamma(\tilde{P}_{Z|X}, c) | \leq \epsilon_2$. 
Hence, 
\begin{align*}
& P_e(P_{Y|X}, \mathcal{C}) +\nu \cdot \gamma(P_{Z|X}, \mathcal{C}) \\
& \leq P_e \left(\tilde{P}_{Y|X}, c\right) + \epsilon_1 + \nu \cdot \gamma \left(\tilde{P}_{Z|X}, c\right)  + \nu \cdot \epsilon_2 \\
& \leq \sum_{\tilde{P}_{Y|X} \in \tilde{\mathcal{D}} } P_e \left(\tilde{P}_{Y|X}, c\right)+ \epsilon_1  \\
& \qquad \qquad + \nu \cdot \sum_{\tilde{P}_{Z|X} \in \tilde{\mathcal{E}} } \gamma \left(\tilde{P}_{Z|X}, c\right) + \nu \cdot \epsilon_2. 
\end{align*}

Therefore, combining with our previous analyses on the error probability and secrecy, we have
\begin{align*}
& \mathbf{E}_{\mathcal{C}} \Big[
\sup_{P_{Y|X} \in \mathcal{D}, P_{Z|X} \in \mathcal{E}} 
\left(
P_e(P_{Y|X}, \mathcal{C}) +\nu \cdot \gamma(P_{Z|X}, \mathcal{C})
\right) 
\Big]\\
& \leq \mathbf{E}_{\mathcal{C}}\Bigg[
\sum_{\tilde{P}_{Y|X} \in \tilde{\mathcal{D}} } P_e \left(\tilde{P}_{Y|X}, \mathcal{C} \right) + \epsilon_1  \\
& \qquad  \qquad  \qquad  + \nu\cdot \sum_{\tilde{P}_{Z|X} \in \tilde{\mathcal{E}} } \gamma \left(\tilde{P}_{Z|X}, \mathcal{C} \right) + \nu\cdot \epsilon_2 
\Bigg] \\
& \leq |\tilde{\mathcal{D}}| \cdot \overline{P_e} + \nu\cdot |\tilde{\mathcal{E}}| \cdot \overline{\gamma} + \epsilon_1 + \nu\cdot \epsilon_2,
\end{align*} 
and there exists a fixed set of points $((\bar{u}_i, \bar{m}_i), t_i)_{i\in\mathbb{N}}$ such that conditioned on codebook $\mathcal{C} = ((\bar{U}_i, \bar{M}_i),T_i)_{i\in\mathbb{N}} = ((\bar{u}_i, \bar{m}_i), t_i)_{i\in\mathbb{N}}$ the desired bound is attained, and hence the proof is completed.

\end{proof}

\subsection{Recovery of the Asymptotic Results}

In this section, we recover the existing asymptotic results~\cite{liang2009compound}. The procedure is generally similar to Section~\ref{sec::recovery_hiding} where we recover the asymptotic information hiding capacity~\cite{moulin2003information}. 
In~\cite{liang2009compound}, it was assumed that all the random variables are discrete, the channels are memoryless, and $\mathcal{D}_n := \{P_{Y_1|X}^n, \ldots, P_{Y_\mathsf{J}|X}^n \}$ and $\mathcal{E}_n := \{P_{Z_1|X}^n, \ldots, P_{Z_\mathsf{K}|X}^n \}$ for some finite $\mathsf{J}, \mathsf{K}\in\mathbb{N}_+$. 
The setting~\cite{liang2009compound} is depicted in Figure~\ref{fig:wiretap}. 

\begin{figure}[htpb]
	\centering
    \includegraphics[scale = 0.3]
    {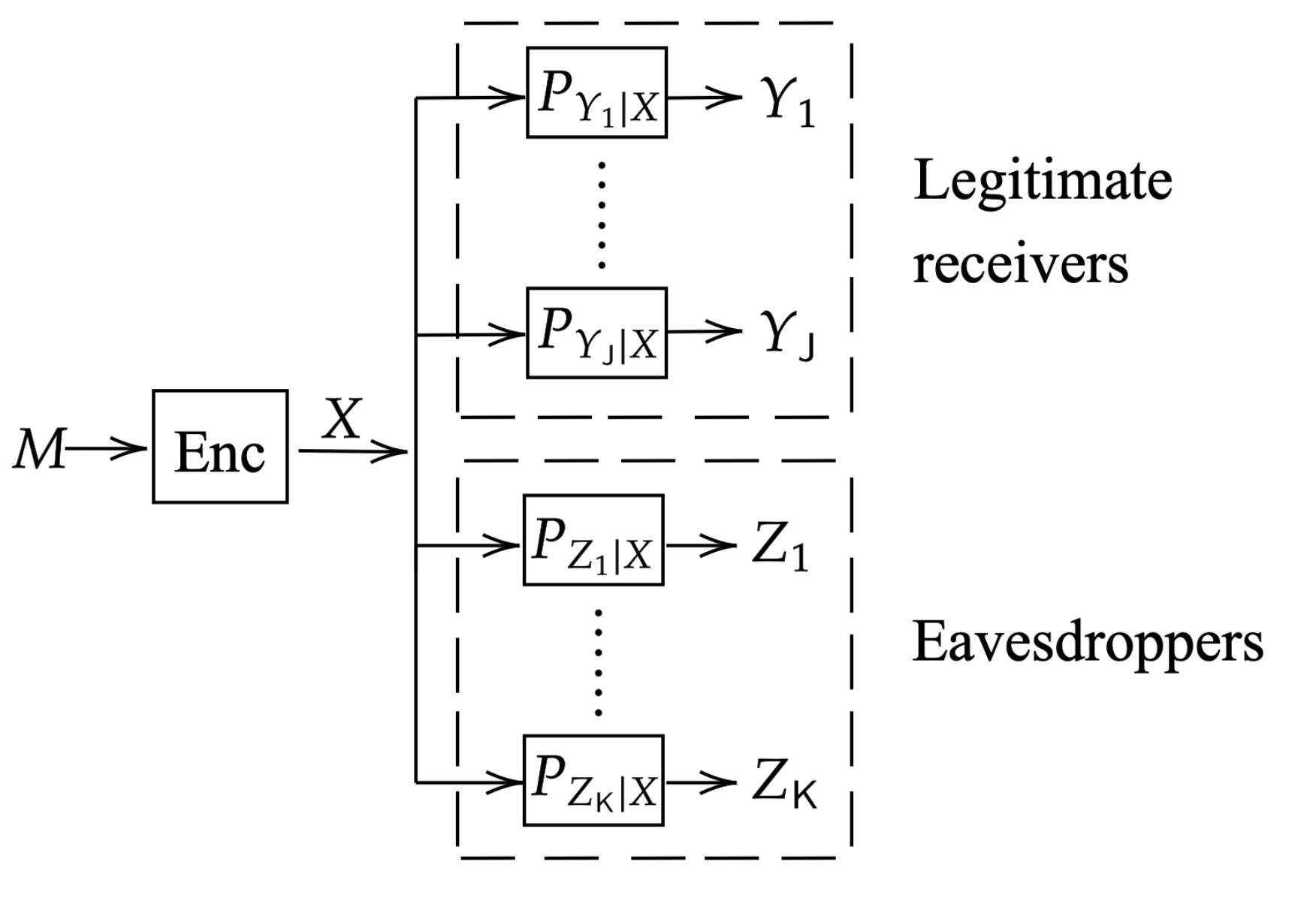}
	\caption{
	Discrete memoryless compound wiretap channel setting from~\cite{liang2009compound}. 
    }
    \label{fig:wiretap}
\end{figure} 

By~\cite{liang2009compound}, for discrete memoryless channels, an achievable secrecy rate is 
\begin{align}
    R & = \max\Big(
    \min_j I(U; Y_j) - \max_k I(U; Z_k)
    \Big) \nonumber \\
    & = \max \min_{j, k} \big(
     I(U; Y_j) - I(U; Z_k)
    \big), \label{eq::asymp_cmpd_wiretap}
\end{align}
where the maximum is taken over all distributions $P_{U,X}$ such that the auxiliary random variable $U$ satisfies the Markov chain $U \leftrightarrow X \leftrightarrow (Y_j, Z_k)$ for $j = 1, \ldots, \mathsf{J}$ and $k = 1, \ldots, \mathsf{K}$. 
Intuitively, this means that we should design the auxiliary random variable $U$ that maximizes the worst-case communication rate $I(U; Y_j) - I(U; Z_k)$, where we consider the \emph{worst} receiver in $\mathcal{D}_n$ and the \emph{most-powerful} eavesdropper in $\mathcal{E}_ n$.

We now show that Theorem~\ref{thm:compound_wiretap} recovers above asymptotic achievable rate in~\eqref{eq::asymp_cmpd_wiretap}, for discrete memoryless compound wiretap channels with finite channel sets and finite $\mathcal{U},\mathcal{X},\mathcal{Y},\mathcal{Z}$. 
We note that although the notations become more complicated, the idea is similar to the recovery of the asymptotic information-hiding capacity in Section~\ref{sec::recovery_hiding}.

Let
\begin{align*}
    \mathcal{D}_n & :=\{ P_{Y_1|X}^n, \ldots,P_{Y_{\mathsf{J}}|X}^n\},\\
    \mathcal{E}_n & :=\{ P_{Z_1|X}^n,\ldots,P_{Z_{\mathsf{K}}|X}^n\},
\end{align*}for some $\mathsf{J},\mathsf{K}\in\mathbb{N}_+$. 
Fix $P_{U,X}$ that achieves
\[
\max_{P_{U,X}} \min_{j,k}\big(I(U;Y_j)-I(U;Z_k)\big),
\]
and write
\begin{align*}
    I_Y &:= \min_{j\in[1:\mathsf{J}]} I(U;Y_j),\\
    I_Z & := \max_{k\in[1:\mathsf{K}]} I(U;Z_k).
\end{align*}
Fix any $\epsilon>0$ and any rate
\[
R < I_Y - I_Z - \epsilon.
\]
We will show that $R$ is achievable. 

Let $\mathsf{L}=\lfloor 2^{nR}\rfloor$. Take $\mathsf{A}:=2^{\lfloor n(I_Z+\epsilon/2)\rfloor}$ and $\mathsf{B}:=2^{\lfloor n(I_Z+\epsilon/3)\rfloor}$, and also take
\begin{equation}
\hat{P}_{Y^n|X^n}
:=\frac{1}{\mathsf{J}}\sum_{j=1}^{\mathsf{J}} P_{Y_j|X}^n .
\label{eq::wiretap_Phat_app}
\end{equation}
We apply Theorem~\ref{thm:compound_wiretap} with $\nu=1$ and $\epsilon_1=\epsilon_2=0$.

Fix any $j_0 \in[1:\mathsf{J}]$ and any $k\in[1:\mathsf{K}]$. Consider the i.i.d. sequences
\[
(U^n,X^n,Y^n,Z^n)\sim P_{U,X}^n\, P_{Y_{j_0}|X}^n\, P_{Z_k|X}^n.
\]
In Theorem~\ref{thm:compound_wiretap}, the term~\eqref{eq::compound_wiretap_decode} corresponds to the decoding error, whereas the term~\eqref{eq::compound_wiretap_secrecy} corresponds to secrecy. 
We first consider the decoding error term~\eqref{eq::compound_wiretap_decode} as follows.

Recall that in Theorem~\ref{thm:compound_wiretap}, the information density $\iota(U^n;Z^n)$ is computed under
$P_{U,X}^n P_{Y_{j_0}|X}^n P_{Z_k|X}^n$, while $\hat{\iota}(U^n;Y^n)$ is computed under
$P_{U,X}^n \hat{P}_{Y^n|X^n} P_{Z_k|X}^n$. 
We denote $P_{Y_j|U}(y|u):=\sum_{x}P_{X|U}(x|u)P_{Y_j|X}(y|x)$ and $P_{Y_j|U}^n(y^n|u^n)=\prod_{i=1}^n P_{Y_j|U}(y_i|u_i)$,  considering $P_{X^n|U^n}=P_{X|U}^n$ and~\eqref{eq::wiretap_Phat_app}, we have
\[
\hat{P}_{Y^n|U^n}(y^n|u^n)
=\sum_{x^n} P_{X^n|U^n}(x^n|u^n)\hat{P}_{Y^n|X^n}(y^n|x^n)
=\frac{1}{\mathsf{J}}\sum_{j=1}^{\mathsf{J}} P_{Y_j|U}^n(y^n|u^n),
\]
and similarly $\hat{P}_{Y^n}(y^n) = {1}/{\mathsf{J}}\cdot\sum_{j=1}^{\mathsf{J}} P_{Y_j}^n(y^n)$. 
Therefore,
\begin{equation}
\hat{\iota}(u^n;y^n)
:=\log\frac{\hat{P}_{Y^n|U^n}(y^n|u^n)}{\hat{P}_{Y^n}(y^n)}
=
\log\frac{{1}/{\mathsf{J}}\cdot \sum_{j=1}^{\mathsf{J}}P_{Y_j|U}^n(y^n|u^n)}
{{1}/{\mathsf{J}} \,\cdot \sum_{j=1}^{\mathsf{J}}P_{Y_j}^n(y^n)} .
\label{eq::hat_iota_nletter_def}
\end{equation}

{\color{black}

For each $j\in[1 : \mathsf{J}]$, denote $P_{Y_j}(y):=\sum_{u}P_U(u)P_{Y_j|U}(y|u)$ and $P_{Y_j}^n(y^n):=\prod_{i=1}^n P_{Y_j}(y_i)$, and
\[
\iota_j(u^n;y^n)
:=
\log\frac{P_{Y_j|U}^n(y^n|u^n)}{P_{Y_j}^n(y^n)}.
\]
For the chosen channel $P_{Y_{j_0}|X}$, under
$(U^n,Y^n)\sim P_U^nP_{Y_{j_0}|U}^n$, we know $\mathbf{E}[\iota_{j_0}(U^n;Y^n)] = nI(U;Y_{j_0})$. }
For $j\neq j_0$, define the following ratio with the convention that it is set to $0$ whenever the denominator is $0$,
\[
    r_{j} := \frac{P_{Y_j}^n(Y^n)}{P_{Y_{j_0}}^n(Y^n)} .
\]
Then under $Y^n\sim P_{Y_{j_0}}^n$, we have
\[
\mathbf{E}[r_{j}]
=
\sum_{y^n:\, P_{Y_{j_0}}^n(y^n)>0}
P_{Y_{j_0}}^n(y^n)\frac{P_{Y_j}^n(y^n)}{P_{Y_{j_0}}^n(y^n)}
\leq 1.
\]
Then by Markov's inequality, for any $\delta>0$, we have
\[
\mathbf{P}\big(r_{j}>2^{n\delta}\big)\leq 2^{-n\delta}.
\]
By a union bound over $j\neq j_0$, with probability at least
$1-(\mathsf{J}-1)2^{-n\delta}$,
\begin{equation}
\sum_{j\neq j_0} r_{j} \leq (\mathsf{J}-1)2^{n\delta}.
\label{eq::mix_good_event_fixed}
\end{equation}
{
\color{black}
Using~\eqref{eq::hat_iota_nletter_def}, we have
\begin{align}
\hat{\iota}(U^n;Y^n)
&=
\log\frac{\sum_{j=1}^{\mathsf{J}}P_{Y_j|U}^n(Y^n|U^n)}
{\sum_{j=1}^{\mathsf{J}}P_{Y_j}^n(Y^n)}
\nonumber\\
&\stackrel{(a)}{\geq}
\log\frac{P_{Y_{j_0}|U}^n(Y^n|U^n)}
{P_{Y_{j_0}}^n(Y^n)\Big(1+\sum_{j\neq j_0}r_{j}\Big)}
\nonumber\\
&=
\iota_{j_0}(U^n;Y^n)
-
\log\Big(1+\sum_{j\neq j_0}r_{j}\Big)
\nonumber\\
&\stackrel{(b)}{\geq}
\iota_{j_0}(U^n;Y^n)
-
\log \big(1+(\mathsf{J}-1)2^{n\delta}\big)
\nonumber\\
&\stackrel{(c)}{\geq}
\iota_{j_0}(U^n;Y^n)-(n\delta+\log\mathsf{J}).
\label{eq::hat_lower_bound_fixed}
\end{align}
where $(a)$ is from $\sum_{j=1}^{\mathsf{J}}P_{Y_j|U}^n(Y^n|U^n)\geq P_{Y_{j_0}|U}^n(Y^n|U^n)$ and $\sum_{j=1}^{\mathsf{J}}P_{Y_j}^n(Y^n) = P_{Y_{j_0}}^n(Y^n) \big(1+\sum_{j\neq j_0}r_{j}\big)$; $(b)$ is from~\eqref{eq::mix_good_event_fixed}; and $(c)$ is from $1 + (\mathsf{J}-1)2^{n\delta}\leq \mathsf{J}2^{n\delta}$.
}

Next, since $\mathcal{U},\mathcal{Y}$ are finite, by Hoeffding's inequality, for any $\delta>0$,
\begin{equation}
\mathbf{P}\big(\iota_{j_0}(U^n;Y^n)\leq n(I(U;Y_{j_0})-\delta)\big)
\leq t_n.
\label{eq::iy_conc_fixed}
\end{equation}
where $t_n\leq \exp(-c n\delta^2)$ for some constant $c > 0$ independent of $j_0$, and hence $t_n$ tends to $0$ exponentially fast with respect to $n$.

Let $\mathcal{G}_{j_0}$ denote the intersection of the event~\eqref{eq::mix_good_event_fixed} and the event
$\{\iota_{j_0}(U^n;Y^n)\geq n(I(U;Y_{j_0})-\delta)\}$. Combining~\eqref{eq::hat_lower_bound_fixed} and~\eqref{eq::iy_conc_fixed},
on $\mathcal{G}_{j_0}$ we have
\[
\hat{\iota}(U^n;Y^n)\geq n(I(U;Y_{j_0})-2\delta)-\log\mathsf{J},
\]
and moreover
\[
\mathbf{P}(\mathcal{G}_{j_0}^c)\leq (\mathsf{J}-1)2^{-n\delta}+t_n.
\]
Therefore, 
\begin{align}
&\mathbf{E}_{Y^n|X^n\sim P_{Y_{j_0}|X}^n}\Big[\min\big\{\mathsf{L}\mathsf{A}2^{-\hat{\iota}(U^n;Y^n)},1\big\}\Big] \nonumber \\
&\leq \mathbf{E}\big[\mathsf{L}\mathsf{A}2^{-\hat{\iota}(U^n;Y^n)}\mathbf{1}_{\mathcal{G}_{j_0}}\big]
+\mathbf{P}(\mathcal{G}_{j_0}^c) \nonumber \\
&\leq \mathsf{L}\mathsf{A}\cdot 2^{-n(I(U;Y_{j_0})-2\delta)}\cdot 2^{\log\mathsf{J}}
+(\mathsf{J}-1)2^{-n\delta}
+ t_n \nonumber \\
&\leq \mathsf{J}\cdot 2^{n(R+I_Z+\epsilon/2-I(U;Y_{j_0})+2\delta)}
+(\mathsf{J}-1)2^{-n\delta}
+ t_n. \label{eq::decoding_term_final}
\end{align}
Choose $\delta=\epsilon/8$. Since $R<I_Y-I_Z-\epsilon$ and $I(U;Y_{j_0})\geq I_Y$, we have
\begin{align*}
    & R+I_Z+\epsilon/2-I(U;Y_{j_0})+2\delta \\
    & \leq (I_Y-I_Z-\epsilon)+I_Z+\epsilon/2-I_Y+\epsilon/4 \\
    & = -\epsilon/4,
\end{align*}
so the right-hand side of~\eqref{eq::decoding_term_final} vanishes exponentially as $n\to\infty$, uniformly over $j_0$. 
Recall $\mathcal{D}_n$ is finite, and $N_{0}(\mathcal{D}_n)\leq \mathsf{J}$, consequently,
\begin{equation}
N_{0}(\mathcal{D}_n)  \cdot \sup_{P_{Y|X}^n\in\mathcal{D}_n}
\mathbf{E}_{Y^n|X^n\sim P_{Y|X}^n}
\Big[\min\big\{\mathsf{L}\mathsf{A}2^{-\hat{\iota}(U^n;Y^n)},1\big\}\Big]
\to 0.
\label{eq::decoder_term_vanish_fixed}
\end{equation}

We then consider the secrecy term~\eqref{eq::compound_wiretap_secrecy}.
Let $P_{Z_k|U}(z|u):=\sum_x P_{X|U}(x|u)P_{Z_k|X}(z|x)$ and $P_{Z_k}(z):=\sum_u P_U(u)P_{Z_k|U}(z|u)$. 
Fix any $k\in[1:\mathsf{K}]$ and define
\[
\iota_k(u^n;z^n):=\log\frac{P_{Z_k|U}^n(z^n|u^n)}{P_{Z_k}^n(z^n)}
=\sum_{i=1}^n \log\frac{P_{Z_k|U}(z_i|u_i)}{P_{Z_k}(z_i)}.
\]
We have $\mathbf{E}[\iota_k(U^n;Z^n)]=nI(U;Z_k)\leq nI_Z$. 
Since $\mathcal{U},\mathcal{Z}$ are finite, by Hoeffding's inequality, for any $\eta>0$,
\begin{equation}
\mathbf{P}\big(\iota_k(U^n;Z^n)\geq n(I(U;Z_k)+\eta)\big)
\leq s_n, 
\label{eq::iz_conc_fixed}
\end{equation}
where $s_n\leq \exp(-c n\eta^2)$ for some constant $c>0$ independent of $k$, and hence $s_n$ tends to $0$ exponentially fast with respect to $n$.

Define the event $\mathcal{T}_k:=\{\iota_k(U^n;Z^n)\leq n(I(U;Z_k)+\eta)\}$. Then by~\eqref{eq::iz_conc_fixed}, 
\begin{align}
\mathbf{E}\Big[\big(1+2^{-\iota_k(U^n;Z^n)}\big)^{-\mathsf{B}}\Big]
&\leq \mathbf{P}(\mathcal{T}_k^c)
+\big(1+2^{-n(I(U;Z_k)+\eta)}\big)^{-\mathsf{B}} \nonumber \\
&\stackrel{(a)}{\leq} s_n 
+\exp \Big(-(\mathsf{B}/2)\cdot 2^{-n(I(U;Z_k)+\eta)}\Big), \label{eq::secrecy_term_bound_fixed}
\end{align}
where the inequality $(a)$ uses $\ln(1+t)\geq t/2$ for $t\in(0,1]$ and the fact that $2^{-n(I(U;Z_k)+\eta)}\leq 1$.
With $\mathsf{B}= 2^{\lfloor n(I_Z+\epsilon/3)\rfloor}$ and $I(U;Z_k)\leq I_Z$, taking $\eta=\epsilon/6$ gives
\[
(\mathsf{B}/2)\cdot 2^{-n(I(U;Z_k)+\eta)}
\geq \frac{1}{4}\cdot 2^{n(I_Z+\epsilon/3-I_Z-\epsilon/6)}
=\frac{1}{4}\cdot2^{n\epsilon/6}, 
\]
i.e., the second term of~\eqref{eq::secrecy_term_bound_fixed} tends to $0$ super-exponentially fast. 
Hence
\begin{equation}
\sup_{P_{Z|X}^n\in\mathcal{E}_n}
\mathbf{E}_{Z^n|X^n\sim P_{Z|X}^n}
\Big[\big(1+2^{-\iota(U^n;Z^n)}\big)^{-\mathsf{B}}\Big]
\to 0
\label{eq::secrecy_exp_vanish_fixed}
\end{equation}
exponentially fast, and the other term in~\eqref{eq::compound_wiretap_secrecy}, 
\begin{equation}
\sqrt{\mathsf{B}\mathsf{A}^{-1}}
\leq
\sqrt{2}\cdot 2^{{n}/{2}(I_Z+\epsilon/3-I_Z-\epsilon/2)}
=
\sqrt{2}\cdot 2^{-n\epsilon/12}\to 0,
\label{eq::BA_vanish_fixed}
\end{equation}
also exponentially  fast. 
Recall $\mathcal{E}_n$ is finite, and $N_0(\mathcal{E}_n)\leq \mathsf{K}$, therefore we conclude
\begin{align}
	N_{0} (\mathcal{E}_n) \bigg(\, \sup_{\tilde{P}_{Z|X}^n\in \mathcal{E}_n} 2 \cdot \mathbf{E}_{Z^n|X^n\sim \tilde{P}_{Z|X}^n} \Big[ \big(1 + 2^{-\iota(U^n;Z^n) } \big)^{-\mathsf{B}} \Big]   + \sqrt{ \mathsf{B}\mathsf{A}^{-1}} \,\bigg) \to 0 \label{eq::secrecy_final2}
\end{align}

Combining~\eqref{eq::decoder_term_vanish_fixed} and~\eqref{eq::secrecy_final2}, we conclude that for the terms in Theorem~\ref{thm:compound_wiretap}, $P_e+\gamma\to 0$ as $n\to\infty$, and therefore any rate $R<I_Y-I_Z$ is achievable.
Maximizing over $P_{U,X}$ completes the proof and recovers~\eqref{eq::asymp_cmpd_wiretap}.

\section{Future Applications}
\label{sec::watermarking_AI}

We briefly discuss some future directions related to the information hiding problem. 

In recent years, with the great success of artificial intelligence, software based on generative models is now able to produce content as realistic as works by human creators. 
However, the issue of copyright has become controversial: generative models can potentially be trained on public data without obtaining permission from human authors. 
To protect intellectual property and prohibit the unauthorized use of original works in the training of generative models, practical watermarking tools have been developed~\cite{shan2023glaze, zhu2018hidden, zhang2021brief} to embed information into human-created works before they are published. 
However, most existing implementations rely on ad hoc designs that may introduce vulnerabilities~\cite{ji2024principled}. 
For better design of such tools, it is crucial to develop a solid understanding of the information hiding problem. 
As an example, a principled design for text watermarking was proposed in~\cite{ji2024principled}, following the theoretical analysis of the information hiding problem~\cite{moulin2003information}. 
Since our one-shot achievability results generalize the asymptotic information hiding, drop certain assumptions and still recover the asymptotic hiding capacity~\cite{moulin2003information} (thus providing an alternative understanding), it is of interests to explore how can our nonasymptotic information hiding results contribute to improved designs for watermarking technologies.

Moreover, steganography is closely related to the information hiding problem~\cite{wang2008perfectly}, and an interesting connection between steganography and minimum entropy coupling~\cite{li2020efficient, cicalese2019minimum, compton2023minimum} has been discussed in~\cite{de2022perfectly}. 
Considering that the coding scheme in this paper is based on the Poisson functional representation~\cite{li2018strong}, which is also related to the minimum entropy coupling (see discussions in~\cite{li2020efficient}), it is worth investigating whether there are deeper connections between minimum entropy coupling and the one-shot information hiding problem studied in this paper.

\section{Conclusion}

In this paper, we derive novel one-shot achievability results for two fundamental secrecy problems in information theory: the information hiding problem~\cite{moulin2003information} and the compound wiretap channel~\cite{liang2009compound}. 
Our bounds are based on the Poisson matching lemma combined with a covering argument, and they apply to both continuous and discrete settings. 
For both problems, unlike most existing studies, we do not assume that the decoder knows the channel state in the one-shot setting. 
The proposed one-shot results hold for arbitrary source distributions and general channel classes (not necessarily memoryless or ergodic), and they recover existing asymptotic results when applied to discrete memoryless channels, potentially under distortion constraints. 
Thus, our results offer alternative, and potentially simpler, proofs of known capacity theorems. 
We have also discussed possible generalizations and future applications.

\bibliographystyle{IEEEtran}
\bibliography{ref.bib}
\end{document}